\let\NAT@parse\undefined
\newtheorem{theorem}{Theorem}
\newtheorem{assumption}{Assumption}
\newtheorem{proposition}{Proposition}
\newtheorem{corollary}{Corollary}
\newtheorem{lemma}{Lemma}
\newtheorem{remark}{Remark}
\newcommand{\p}{\mbf{p}^*(t)}
\newcommand{\R}{\mathbb{R}}
\newcommand{\mbf}[1]{\mathbf{#1}}
\title{\LARGE \bf 
Robustness Analysis for an Online Decentralized \\Descent Power allocation algorithm
}
\author{Chinwendu Enyioha, Sindri Magn\'{u}sson, Kathryn Heal,  Na Li, Carlo Fischione, and Vahid Tarokh
 \thanks{This work was supported by the VR Chromos Project and NSF grant No. 1548204.}
\thanks{C. Enyioha, K. Heal, N. Li, and V. Tarokh are with the School of Engineering and Applied Sciences, Harvard University, Cambridge, MA~USA. (email: cenyioha@seas.harvard.edu; kathrynheal@g.harvard.edu; nali@seas.harvard.edu; vahid@seas.harvar.edu) }%
\thanks{S. Magn\'{u}sson and C.~Fischione are with Electrical Engineering School, Access Linnaeus Center, KTH Royal Institute of Technology, Stockholm, Sweden. (e-mail:{ sindrim@kth.se; carlofi@kth.se})}%
}
\begin{document}
\maketitle


\begin{abstract}
As independent service providers shift from conventional energy to renewable energy sources, the power distribution system will likely experience increasingly significant fluctuation in supply, given the uncertain and intermittent nature of renewable sources like wind and solar energy. These fluctuations in power generation, coupled with time-varying consumer demands of electricity and the massive scale of power distribution networks present the need to not only design real-time decentralized power allocation algorithms, but also characterize how effective they are given fast-changing consumer demands and power generation capacities. In this paper, we present an Online Decentralized Dual Descent (OD3) power allocation algorithm and determine (in the worst case) how much of observed social welfare and price volatility can be explained by fluctuations in generation capacity and consumer demand. Convergence properties and performance guarantees of the OD3 algorithm are analyzed by characterizing the difference between the online decision and the optimal decision. The theoretical results in the paper are validated and illustrated by numerical experiments using real data.

 \end{abstract}

  \section{Introduction}
As the quest to integrate more renewable energy sources into the power distribution grid continues, it is important to understand how systemic fluctuations both in supply and consumer demand affect real-time power allocation policies and the social welfare of the distribution systems.
Some of the challenges of integrating renewable energy sources into the grid have been well-documented  \cite{klessmann2008pros,schleicher2012renewables}. Though research on wind forecasting to reduce uncertainty in day ahead schedules for wind power generation exists, variability in the wind resource continues to pose challenges for the integration of wind power in forward electricity markets \cite{anderson2008reducing}. A similar conclusion can be made of solar energy.
The market value of variable renewable energy has also been analyzed; for instance, in \cite{hirth2013market}, where the authors characterized how the market value of renewable energy sources varies with grid integration. Their study found that the value of wind power fell from $110\%$ of the average power price to about $50-80\%$ as wind penetration increases from zero to $30\%$ of total electricity consumption. This study and results on price fluctuation highlights the difficulty in integrating large-scale renewable energy into the distribution grid. Similar studies in \cite{cox2009impact,mount2010hidden} noted the significant impact of variability of renewable energy sources on the operations of electricity markets. While long-term plans to efficiently integrate renewable energy into future energy infrastructure remains on the horizon \cite{boie2014efficient}, large-scale energy storage mechanisms to soften the variations in supply of energy from renewable sources and the challenges faced on that front have been proposed \cite{castillo2014grid}.

The uncertainties in generation capacity from renewable energy sources and their effects on the electricity markets discussed in \cite{hirth2013market,cox2009impact,mount2010hidden},  present the need for design of fast, efficient and scalable decentralized  \textit{real-time} power allocation algorithms that are robust to inherent systemic fluctuations resulting from  consumers' constantly changing power needs and unstable power generation from renewable energy sources. The current mechanism of a two-way communication in coordinating  decentralized power allocation for such large systems between suppliers and users where the suppliers and users iteratively communicate and carry out computations until they reach an agreement to trigger an event is not only expensive in terms of communication overhead, but also costly in terms of the time it takes to coordinate. This is not preferred, especially given the need to track fast system fluctuations such as variability in generation \cite{lin2012online,mei2011robust} and consumer demands. We propose an alternate approach to the aforementioned coordination scheme in which the decentralized power allocation  algorithm is implemented in real-time: At each time-step, the suppliers update their coordination signal (which is usually able to be interpreted as price signals), and users determine optimal allocations based on the price signal and individual needs. 

With this approach, a number of interesting questions arise, including performance guarantees of the decentralized algorithm in real-time implementation. Since the system (comprising suppliers and users') respectively have constantly changing capacities and power needs/utility, the coordination signal (price) and the power allocation will fluctuate and it is unclear how optimal the online decisions will be. In this paper, we focus on investigating and characterizing performance guarantees of an Online Decentralized Dual Descent (OD3) power allocation algorithm. In particular, we assume the power supply and consumers' utilities fluctuate at the same time scale as iterations of our algorithm; and derive a bound on social welfare of users in the system, based on the OD3 algorithm.

  To prevent drastic changes in both user utility functions and supplier capacities between successive time-steps, we make appropriate smoothness assumptions on the magnitude of fluctuations observed in changes in user utility functions and available supply. The decentralized, iterative algorithm presented in this paper avoids the typical two-way message-passing coordination technique in the literature \cite{kelly1998rate,low1999optimization}, by exploiting the underlying physical characteristics of the problem. Specifically, the suppliers broadcast a price signal, which the users use in computing their optimal consumption at the next time step. Rather than wait for a message on consumption information, the suppliers measure the current power usage to determine the price for the next cycle. A major challenge is that during the process, the system environment may change in that -- power capacity and users' utility changes. In \cite{magnusson2015distributed}, we investigated a special case of the OD3 algorithm -- the static case, whereas in this paper we consider the more challenging \textit{real-time} case. 
  
The rest of the paper is organized as follows: Following notation and definitions, we introduce the system model as well as underlying assumptions, summarize the OD3 power allocation algorithm and state the main result in Section \ref{sec:model}. Sections \ref{sec:market-regulation} and \ref{sec:convergence-analysis} respectively present a volatility analysis of the coordinating (price) signal and power allocation variable and convergence analysis of our algorithm. We follow in Section \ref{sec:numerical} with numerical illustration of our algorithm and make our conclusions in Section \ref{sec:conclude}. \\

    \noindent \textit{Notation:} Vectors and matrices are represented by boldface lower and upper case letters, respectively. We denote the set of real numbers by $\mathbb{R}$, a vector or matrix transpose as $(\cdot)^T$, and the L$2$-norm of a vector by $||\cdot||$. The gradient of a function $f(\cdot)$ is denoted $\nabla f(\cdot)$, and $\langle\ \cdot ,\cdot \rangle$ denotes the inner product of two vectors. We denote the vector of ones as $\mathbf{1}$. \\
  
\noindent \textit{Definitions}

\noindent We say that a function $U:\R^R\rightarrow \R$ is $\sigma$-\textit{strongly convex} if for all $\mbf{q}_1,\mbf{q}_2\in \R^R$ we have 
  \begin{align}
     \langle \nabla U(\mbf{q}_1) - \nabla U(\mbf{q}_2),   \mbf{q}_1 - \mbf{q}_2\rangle \geq \sigma \|\mbf{q}_1 - \mbf{q}_2\|^2, \ \sigma >0. \nonumber
   \end{align}
   
\noindent A function $U(\mbf{q})$ is strongly concave if $-U(\mbf{q})$ is \textit{strongly convex}, i.e., if for all $\mbf{q}_1,\mbf{q}_2\in \R^R$, we have 
     \begin{align} \label{eq:strongly-concave}
      -\langle \nabla U(\mbf{q}_1) - \nabla U(\mbf{q}_2),   \mbf{q}_1 - \mbf{q}_2\rangle \geq \sigma \|\mbf{q}_1 - \mbf{q}_2\|^2. \nonumber
   \end{align}

\noindent A function $U(\cdot)$ is said to be \textit{monotone decreasing}, if for all $\mbf{q}_1 \leq \mbf{q}_2$, it holds that $U(\mbf{q}_1) \geq U(\mbf{q}_2) $.

\section{Model and Algorithm}
\label{sec:model}
\subsection{System Model}
We consider a power distribution system comprising $N$ users and $R$ power producers. The objective at time $t$ is to solve a decentralized dynamic power allocation problem to maximize the aggregate social welfare of the system.  We  define the aggregate social welfare as the utilities users gain from consuming power $\mbf{q}_i(t)$ at time $t$; that is, $\sum_{i=1}^N U_i^t(\mbf{q}_i(t))$. We abstract away specific power flow constraints and assume that each user and supplier respectively have a dynamic utility function and capacity. Let the allocation of the $N$ users be $\mbf{q}_1(t),\cdots,\mbf{q}_N(t)$, where $\mbf{q}_i(t)\in \R^R$. We assume each user can choose from the $R$ available power suppliers. The $j$th entry of the $i$th vector, $\mbf{q}_i^j(t)$ represents the power allocation to the $i$th user from the $j$th supplier at time $t$. Furthermore, let the capacity of the power suppliers at time $t$ be represented by the vector $Q(t) \in \mathbb{R}^R$, where $Q_j$, the $j$th entry of $Q$, represents the power supply capacity at the $j$th supplier. At time $t$, the objective of the system operator is to maximize some strongly concave utility function of users' power allocation, $U_i^t(\mbf{q}_i(t))$. 
We formulate the optimal power allocation problem as the following optimization program for each time $t$:
\begin{equation}
\begin{aligned}\label{eq:system-problem1}
& \underset{\mbf{q}_1(t),\hdots,\mbf{q}_N(t)}{\text{maximize}}
& & \sum_{i=1}^N U_i^t(\mbf{q}_i(t)) \\
& \text{subject to}
 & & \sum_{i=1}^N \mbf{q}_i(t) = Q(t). \\
\end{aligned}
\end{equation}
The static case of \eqref{eq:system-problem1} where the utility functions and supplier capacities are fixed has been studied in the literature; for example, \cite{magnusson2015distributed}, where the system comprised a single power producer and $N$ users and conditions that guarantee feasibility of the power allocation problem at each step of the iterative solution were derived. 

In Problem \eqref{eq:system-problem1}, the decision variable at each user $\mbf{q}_i$ is unrestricted, because we assume that users are able to sell power to the distribution system.  The local utility function $U^t_i(\cdot)$ of users and power generation capacity $Q(t)$ on the supply end are usually time-varying, which poses a challenge to solving Problem \eqref{eq:system-problem1}, because optimization algorithms are usually iterative. Another challenge is that the utility functions $U_i^t(\cdot)$ are known locally by user $i$; hence, solving Problem \eqref{eq:system-problem1} requires a decentralized algorithm. 
In the proposed algorithm, for clarity in presentation, we will use the following matrix $\mbf{Q} = [\mbf{q}_1, \hdots, \mbf{q}_N ] \in \mathbb{R}^{R\times N}$ to compute the aggregate power allocation by each supplier and compute the coordinating (or price) signal at the next time step.
\begin{assumption} (Strong Concavity): We assume the users' utility functions, $U_i^t(\mbf{q})$, are strongly concave in the variable $\mbf{q}$ with parameter $\sigma_i^t$. 
\label{assum:strong-concave}
\end{assumption}

\begin{assumption} (Lipschitz Gradients): The gradients of the utility function of each user $i$ is Lipschitz continuous at each time-step. In other words, for all vector pairs $\mbf{q}_1$ and $\mbf{q}_2$, $\|\nabla U_i^t(\mbf{q}) - \nabla U_i^t(\mbf{q}_2) \| \leq  L_i^t \|\mbf{q}_1 - \mbf{q}_2 \|$, where $L_i^t < \infty$ is the Lipschitz constant.
\label{assum:Lip-grad}
\end{assumption}
Assumptions \ref{assum:strong-concave} and \ref{assum:Lip-grad} above imply that the (local) objective function in Problem \eqref{eq:system-problem1} is strongly concave with Lipschitz gradients. We will assume that $\sigma$ and $L$ are respectively the global concavity and Lipschitz parameter for $U_i^t(\cdot)$ and $\nabla U_i^t(\cdot)$; that is, 
\begin{equation}
\sigma = \min_{i, t} \{\sigma_i^t\} \quad \text{and} \quad L = \max_{i,t} \{L_i^t\}. \label{eq:global}
\end{equation}
  Problem \eqref{eq:system-problem1} can easily be generalized to account for the changing cost of power production by the suppliers. In this paper, we assume the utility functions of users in the system and capacity of power suppliers are time-changing; and to prevent drastic variations in the users' utility functions and suppliers' capacities, we make the following smoothness assumptions:
\begin{assumption}\label{assum:capacity-bound}
Between successive time-steps the changes in suppliers' capacities is bounded by $\gamma$; that is,
\begin{equation} \label{regulation:gamma}
\|Q(t) - Q(t+1) \| \leq \gamma,  \ \forall \ t.
\end{equation}
\end{assumption}

\begin{assumption}\label{assum:utility-bound}
We assume that each user $i$ has an upper bound on how much its utility function changes between consecutive time steps; that is,
\begin{equation} \label{regulation:alpha}
\|\nabla U_i^{t+1}(\mbf{q}_i) - \nabla U_i^t(\mbf{q}_i)\| \leq \alpha, \quad \forall \ i, \mbf{q}_i \ \text{and} \ t.
\end{equation}
\end{assumption} 
Assumption \ref{assum:capacity-bound} prevents drastic changes in the supply capacities between consecutive time-steps, ensuring a smoothness property in capacity at the supplier over time. And Assumption \ref{assum:utility-bound} ensures that the rate of change in user utility function over time is bounded -- again to avoid drastic changes in power demand between consecutive time-steps for all users. These assumptions, in practice, can be enabled by the use of backup power sources. 
  Our objective is to present a distributed solution to \eqref{eq:system-problem1} using a one-way communication (coordination) protocol that enables us  characterize variation in the social welfare of the system at each time-step as presented in Theorem \ref{thm:social welfare}.
We present a decentralized policy to solve \eqref{eq:system-problem1} using the following operations for coordinating the decentralized allocation:

\noindent \textbf{Operation 1} (One-way Communication): At each time-step $t$, the power suppliers broadcast a message (referred to as the unit cost of power), to the users. Note that the cost of power charged by the $R$ different suppliers are allowed to be different.

\noindent \textbf{Operation 2} (Feedback Information) At each time-step, the Load Servicing Entities measure the difference between their total power capacity and their supply; that is, $\mbf{Q}(t)\mbf{1} - Q(t)$.
We investigate the performance of the well-known dual descent algorithm for Problem \eqref{eq:system-problem1}. Decentralized algorithms that solve \eqref{eq:system-problem1} using Operations 1 and 2 can be achieved via duality theory. 
Let $\mbf{p}(t)\in \mathbb{R}^R$ be the dual variable representing the price charged by the $R $ different suppliers at time $t$. Then, the dual problem of \eqref{eq:system-problem1} is
\begin{equation}
\begin{aligned}\label{eq:dual}
& \underset{\mbf{p}}{\text{minimize}}
& & D^t(\mbf{p}(t))
\end{aligned}
\end{equation}
where $D^t(\cdot)$ is the dual function and $D^t(\cdot)$ is:
\begin{align}
D^t(\mbf{p}(t)) =& \max_{\mbf{q}_i} \quad\mathcal{L}(\mbf{q},\mbf{p}), \label{eq:compact-dual}
\end{align}
where $$\mathcal{L}(\mbf{q},\mbf{p}) =  \left[ \sum_{i=1}^N U_i^t(\mbf{q}_i(\mbf{p}(t))) {-} \mbf{p}(t)^T \left(   \sum_{i=1}^N \mbf{q}_i(\mbf{p}(t)) {-} Q(t)  \right) \right],$$ and $\mbf{q}_i(\mbf{p}(t))$ is the power demand of user $i$ based on price $\mbf{p}(t)$ at time $t$.
The respective local problem for each user $i$ is to solve:
\begin{align}
\mbf{q}_i(\mbf{p}(t)) & = \arg \max_{\mbf{q}_i} \ \left[ U_i^t(\mbf{q}_i) - \mbf{p}^T\mbf{q}_i\right]. \label{eq:local-primal}
\end{align}
The structure of the problem enables us to claim the following result:

\begin{lemma}\label{lem:strong-duality}
(Strong Duality): Consider Problem \eqref{eq:system-problem1} at time $t$, and suppose Assumptions \ref{assum:strong-concave} and \ref{assum:Lip-grad} hold, and let $\mbf{p}^*(t)$ be the optimal solution to \eqref{eq:dual}, then $\mbf{q}(\mbf{p}^*(t)) = \{\mbf{q}_i(\mbf{p}^*(t))\}_{i=1}^N$ (cf. \eqref{eq:local-primal}) is the optimal solution to \eqref{eq:system-problem1}.
\end{lemma}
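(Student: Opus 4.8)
The plan is to exploit the fact that Problem \eqref{eq:system-problem1} is a strongly concave maximization subject to a single affine (equality) constraint, so that the standard machinery of Lagrangian duality applies cleanly. First I would record that Assumption \ref{assum:strong-concave} guarantees that for every fixed price $\mbf{p}$ the inner maximization in \eqref{eq:compact-dual} separates across users and has a \emph{unique} maximizer, namely the local solution $\mbf{q}_i(\mbf{p})$ defined in \eqref{eq:local-primal}, characterized by the first-order condition $\nabla U_i^t(\mbf{q}_i(\mbf{p})) = \mbf{p}$. This well-definedness is the property that makes the dual function $D^t(\cdot)$ single-valued and, as I will use next, differentiable. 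I would also note that, since the decision variables $\mbf{q}_i\in\R^R$ are unrestricted, the feasible set $\{\,\mbf{q}:\sum_i\mbf{q}_i=Q(t)\,\}$ is trivially nonempty.

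The core of the argument is to compute the gradient of the dual function and evaluate it at the dual optimum. By Danskin's theorem (applicable precisely because the inner maximizer is unique and the Lagrangian is affine in $\mbf{p}$), the gradient of $D^t$ equals the negative constraint residual, $\nabla D^t(\mbf{p}) = Q(t) - \sum_{i=1}^N \mbf{q}_i(\mbf{p})$. Because $D^t$ is a pointwise maximum of functions affine in $\mbf{p}$ it is convex, and since $\mbf{p}^*(t)$ minimizes it, the first-order optimality condition $\nabla D^t(\mbf{p}^*(t)) = \mbf{0}$ yields $\sum_{i=1}^N \mbf{q}_i(\mbf{p}^*(t)) = Q(t)$. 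In other words, the recovered allocation $\mbf{q}(\mbf{p}^*(t))$ is primal feasible.

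Feasibility is the key that collapses the duality gap. Substituting this identity back into the Lagrangian, the coupling term $(\mbf{p}^*(t))^T(\sum_i \mbf{q}_i(\mbf{p}^*(t)) - Q(t))$ vanishes, so $D^t(\mbf{p}^*(t)) = \sum_{i=1}^N U_i^t(\mbf{q}_i(\mbf{p}^*(t)))$, which is exactly the primal objective evaluated at the feasible point $\mbf{q}(\mbf{p}^*(t))$. Weak duality forces the primal objective at \emph{every} feasible point to be no larger than $D^t(\mbf{p}^*(t))$; attaining this value therefore certifies that $\mbf{q}(\mbf{p}^*(t))$ is a global maximizer of \eqref{eq:system-problem1}, establishing the claim.

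The step I expect to require the most care is justifying that the zero-gradient condition at $\mbf{p}^*(t)$ is both available and sufficient, i.e.\ that strong duality holds with a finite dual minimizer. This rests on the affine structure of the equality constraint, which makes a Slater-type constraint qualification unnecessary beyond the (automatic) primal feasibility noted above, together with the strong concavity of the objective, which guarantees a unique, finite-valued primal optimum and hence a finite dual solution. Rather than re-derive the no-gap property from scratch, I would state this reliance explicitly and invoke the standard convex-duality result for concave programs with affine constraints.
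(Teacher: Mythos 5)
Your proof is correct, but it takes a genuinely different route from the paper's. The paper disposes of the lemma in one line: it observes that the constraint is affine and the variables unrestricted, so Slater's condition holds trivially, and cites the standard zero-duality-gap result from Boyd and Vandenberghe. You instead give a self-contained certificate argument: uniqueness of the inner maximizer (from strong concavity) makes $D^t$ differentiable with $\nabla D^t(\mbf{p}) = Q(t) - \sum_i \mbf{q}_i(\mbf{p})$ via Danskin, the stationarity of the given dual minimizer $\mbf{p}^*(t)$ forces primal feasibility of $\mbf{q}(\mbf{p}^*(t))$, the coupling term then vanishes so the dual value equals the primal objective at that feasible point, and weak duality certifies global optimality. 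What your approach buys is that it actually proves the full claim of the lemma --- namely that the \emph{specific} point recovered from the local problems \eqref{eq:local-primal} at $\mbf{p}^*(t)$ is primal optimal --- whereas the paper's citation establishes only a zero duality gap and leaves the primal-recovery step implicit (it silently relies on the same uniqueness of the inner maximizer that you make explicit). What the paper's approach buys is brevity and avoidance of any differentiability discussion. The one hypothesis you lean on, existence of a finite dual minimizer, is supplied by the lemma statement itself, so there is no gap; you could tighten the exposition slightly by noting that your final paragraph's appeal to ``standard convex duality'' is not actually needed, since your weak-duality certificate already closes the argument.
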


  \begin{proof}
     Convexity of the problem coupled with the constraints $\sum_{i=1}^N \mbf{q}_i(t) = Q(t)$, and $\mbf{q}_i \in \mathbb{R}^R$ ensures that~\eqref{eq:system-problem1} satisfies Slater's condition, yielding a zero duality gap ~\cite[Chapter 5]{convex_boyd}.
  \end{proof}

\begin{proposition}\label{prop:strongly-convex-dual}
Consider Problem \eqref{eq:system-problem1} and suppose Assumptions \ref{assum:strong-concave} and \ref{assum:Lip-grad} hold, then the dual function \eqref{eq:compact-dual} is strongly convex in $\mbf{p}$ with parameter $N/ L$, and its gradient is Lipschitz continuous with parameter $N\sigma$.
\end{proposition}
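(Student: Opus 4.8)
The plan is to work entirely through the gradient of the dual and the first-order optimality conditions of the local subproblems \eqref{eq:local-primal}, rather than through second derivatives. First I would observe that Assumption~\ref{assum:strong-concave} makes each inner maximization in \eqref{eq:compact-dual} strongly concave, so the maximizer $\mbf{q}_i(\mbf{p})$ is unique; Danskin's theorem then yields the clean expression $\nabla D^t(\mbf{p}) = Q(t) - \sum_{i=1}^N \mbf{q}_i(\mbf{p})$. The second ingredient is the stationarity condition of \eqref{eq:local-primal}, namely $\nabla U_i^t(\mbf{q}_i(\mbf{p})) = \mbf{p}$, which is the exact bridge between a price perturbation and the induced allocation perturbation and will carry all the work.

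For strong convexity I would fix two prices $\mbf{p}_1,\mbf{p}_2$, write $\mbf{q}_i^k = \mbf{q}_i(\mbf{p}_k)$, and compute $\langle \nabla D^t(\mbf{p}_1) - \nabla D^t(\mbf{p}_2), \mbf{p}_1 - \mbf{p}_2\rangle = -\sum_i \langle \mbf{q}_i^1 - \mbf{q}_i^2, \mbf{p}_1 - \mbf{p}_2\rangle$. Substituting $\mbf{p}_k = \nabla U_i^t(\mbf{q}_i^k)$ turns each summand into $-\langle \mbf{q}_i^1 - \mbf{q}_i^2, \nabla U_i^t(\mbf{q}_i^1) - \nabla U_i^t(\mbf{q}_i^2)\rangle$. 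Since $-U_i^t$ is convex with $L$-Lipschitz gradient (Assumption~\ref{assum:Lip-grad}), the Baillon--Haddad co-coercivity inequality bounds this below by $\tfrac{1}{L}\norm{\nabla U_i^t(\mbf{q}_i^1) - \nabla U_i^t(\mbf{q}_i^2)}^2 = \tfrac{1}{L}\norm{\mbf{p}_1-\mbf{p}_2}^2$; summing over the $N$ users gives $\tfrac{N}{L}\norm{\mbf{p}_1-\mbf{p}_2}^2$, i.e. strong convexity with modulus $N/L$.

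For the smoothness bound I would start from $\norm{\nabla D^t(\mbf{p}_1) - \nabla D^t(\mbf{p}_2)} \le \sum_i \norm{\mbf{q}_i^1 - \mbf{q}_i^2}$ and control each allocation gap through strong concavity (Assumption~\ref{assum:strong-concave}): $\sigma \norm{\mbf{q}_i^1 - \mbf{q}_i^2}^2 \le -\langle \nabla U_i^t(\mbf{q}_i^1) - \nabla U_i^t(\mbf{q}_i^2), \mbf{q}_i^1 - \mbf{q}_i^2\rangle = \langle \mbf{p}_2 - \mbf{p}_1, \mbf{q}_i^1 - \mbf{q}_i^2 \rangle \le \norm{\mbf{p}_1-\mbf{p}_2}\,\norm{\mbf{q}_i^1-\mbf{q}_i^2}$ by Cauchy--Schwarz, whence $\norm{\mbf{q}_i^1-\mbf{q}_i^2} \le \tfrac{1}{\sigma}\norm{\mbf{p}_1-\mbf{p}_2}$. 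Summing yields Lipschitz modulus $N/\sigma$ (the $N\sigma$ printed in the statement appears to be a typo for $N/\sigma$; the two coincide only in the well-conditioned case $\sigma = L$).

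The step I expect to be the real obstacle is pinning the strong-convexity constant at exactly $N/L$. The tempting shortcut---lower-bound $\norm{\mbf{q}_i^1-\mbf{q}_i^2}^2 \ge L^{-2}\norm{\mbf{p}_1-\mbf{p}_2}^2$ via Lipschitzness and then apply strong concavity---produces only the weaker modulus $N\sigma/L^2$, because it uses the two assumptions in a mismatched way; the sharp $N/L$ genuinely requires co-coercivity. I would therefore either cite Baillon--Haddad directly or, equivalently and more transparently, note that up to the linear term $\mbf{p}^T Q(t)$ the function $D^t$ is a separable sum over the $N$ users of Legendre--Fenchel conjugates of the strongly concave, smooth utilities, and invoke the standard duality that the conjugate of a $\sigma$-strongly-convex, $L$-smooth function is $1/L$-strongly convex and $1/\sigma$-smooth; this delivers both constants simultaneously.
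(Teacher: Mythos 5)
Your argument for the strong-convexity constant $N/L$ is essentially the paper's own proof: the paper also reduces to the inner product $\langle \mbf{p}_2-\mbf{p}_1,\nabla D_i(\mbf{p}_2)-\nabla D_i(\mbf{p}_1)\rangle$, substitutes $\mbf{p}_k=\nabla U_i(\mbf{q}_k)$ via bijectivity of $\nabla U_i$, and invokes the co-coercivity inequality for the $L$-smooth concave $U_i$ (citing Nesterov's Theorem 2.1.9/2.1.12 rather than naming Baillon--Haddad); the only cosmetic difference is that the paper first splits $D=\sum_i D_i$ and argues per user. Your proof of the gradient-Lipschitz claim is a genuine addition, since the paper's appendix proves only the strong-convexity half and delegates the Lipschitz bound to an unproven auxiliary lemma; your constant $N/\sigma$ agrees with that lemma (Lemma~\ref{bijLemma}-c states $\Gamma=\sum_i[\nabla U_i]^{-1}$ is $N/\sigma$-Lipschitz, and $\nabla D^t=\pm(\Gamma_t(\mbf{p})-Q(t))$), so you are right that the ``$N\sigma$'' in the proposition statement is inconsistent with the paper's own supporting results and should read $N/\sigma$. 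Your closing remark that both constants follow at once from the conjugate-duality fact (the conjugate of a $\sigma$-strongly-convex, $L$-smooth function is $1/L$-strongly convex and $1/\sigma$-smooth) is a cleaner packaging than either proof, though it is not the route the paper takes.
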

\begin{proof}
Details are presented in Appendix \ref{app:proof-prop:strongly-convex-dual}.
\end{proof}

We solve Problem \eqref{eq:system-problem1} in a decentralized manner via  a dual descent algorithm with $\eta > 0$ as step size. Based on aggregate consumption at the previous time-step, suppliers determine and broadcast the coordinating signal (price). Users use that information to  compute their optimal allocation using their current utility functions. By exploiting the underlying physical structure of the network, the supplier measures the total consumption and computes the next price. We summarize the Online Dynamic Decentralized Dual Descent (OD3) Algorithm in Algorithm \ref{alg:algorithm}. 
\begin{algorithm}
\caption{An Online Decentralized Dual Descent (OD3) Algorithm for optimal power allocation}
\label{alg:algorithm}
\begin{algorithmic}[1]
\Statex \textbf{Initialization}: Suppliers set initial price $\mbf{p}(0)$; 
\Statex and let $\eta \in ]0, \bar{\eta}]$ be given.
\For {$t=0, \hdots $}
\State Suppliers broadcast $\mbf{p}(t)$
\For {Users $i=1,\hdots, N$}
\State User $i$ receives $\mbf{p}(t)$ and solves \eqref{eq:local-primal}
\EndFor
\State Suppliers measure $\mbf{Q}(t)\mbf{1} - Q(t)$ 

and compute next price
\State $\mbf{p}(t+1) = \mbf{p}(t) - \eta (\mbf{Q}(t)\mbf{1} - Q(t))  $
\EndFor
\end{algorithmic}
\end{algorithm}
Given Algorithm \ref{alg:algorithm} to solve Problem \eqref{eq:system-problem1}, of interest is to understand and characterize how the real-time and optimal system decisions and social welfare of the system change with time, given the time-varying capacities and utility functions. This brings us to our main result.
\begin{theorem}\label{thm:social welfare} (Main Result)
Suppose Algorithm \ref{alg:algorithm} with step-size $0 < \eta \leq 2L/(N(1+L\sigma))$ is used to solve Problem \ref{eq:system-problem1}, suppose Assumptions \ref{assum:capacity-bound}, and \ref{assum:utility-bound} hold and that each $U_i^t(\cdot)$ is Lipschitz with parameter $L_i^{'t}$. Suppose $L' = \max_{i,t} \{L_i^{'t}\}$. Let $\mbf{q}_i(t)$ and $\mbf{q}_i^*(t)$ respectively be the power allocation obtained from Algorithm \ref{alg:algorithm} and the optimal power allocation at time $t$. The difference between the aggregate online social welfare (given fluctuations in the system) and aggregate optimal social welfare is bounded by
\begin{equation}
 \left\Vert \sum_{i=1}^N U_i^t(\mbf{q}_i(t)) - \sum_{i=1}^N U_i^t(\mbf{q}_i^*(t)) \right\Vert \leq W,
\end{equation}
where $$W  = NL' \left[\frac{c^t}{\sigma}||\mbf{p}(0)-\mbf{p}^*(0)||+\frac{L^2}{\sigma^2}\left(\frac{\gamma}{N}+\frac{\alpha}{\sigma}\right)\right],$$
$$ \text{and} \quad c=\left(1{-}\frac{2\eta\sigma N}{(1+\sigma L)}\right)^{1/2}. $$
 Furthermore, $\mbf{p}(0)$ and $\mbf{p}^*(0)$ are respectively the initial online and optimal prices, $\gamma$ and $\alpha$ are respectively the bounds on variations of the suppliers' capacity and users' utility functions.
  \end{theorem}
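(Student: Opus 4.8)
The plan is to bound the social welfare gap by first passing to the price error $\|\mbf{p}(t)-\mbf{p}^*(t)\|$ and then controlling that error through a contraction-plus-drift recursion. First I would reduce the left-hand side to a price-space quantity. Since each $U_i^t$ is Lipschitz with constant $L_i^{'t}\le L'$, the triangle inequality gives $\left\Vert \sum_i U_i^t(\mbf{q}_i(t)) - \sum_i U_i^t(\mbf{q}_i^*(t)) \right\Vert \le L'\sum_i \|\mbf{q}_i(t)-\mbf{q}_i^*(t)\|$. The online and optimal allocations are $\mbf{q}_i(\mbf{p}(t))$ and $\mbf{q}_i(\mbf{p}^*(t))$ (the latter by Lemma~\ref{lem:strong-duality}), and the local maximizer in \eqref{eq:local-primal} obeys the first-order condition $\nabla U_i^t(\mbf{q}_i(\mbf{p})) = \mbf{p}$; strong concavity (Assumption~\ref{assum:strong-concave}) then makes the demand map $\mbf{p}\mapsto\mbf{q}_i(\mbf{p})$ Lipschitz with constant $1/\sigma$. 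Summing over $i$ gives the clean intermediate estimate: the welfare gap is at most $(NL'/\sigma)\,\|\mbf{p}(t)-\mbf{p}^*(t)\|$, so everything reduces to bounding $d(t):=\|\mbf{p}(t)-\mbf{p}^*(t)\|$.

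Next I would derive a one-step recursion for $d(t)$ by inserting $\mbf{p}^*(t)$ and splitting: $d(t+1)\le \|\mbf{p}(t+1)-\mbf{p}^*(t)\| + \|\mbf{p}^*(t)-\mbf{p}^*(t+1)\|$. The first term is a pure descent step on the fixed dual $D^t$, whose minimizer is $\mbf{p}^*(t)$. Writing $\mbf{p}(t+1)-\mbf{p}^*(t) = (\mbf{p}(t)-\mbf{p}^*(t)) - \eta\big(\nabla D^t(\mbf{p}(t)) - \nabla D^t(\mbf{p}^*(t))\big)$, where $\mbf{Q}(t)\mbf{1}-Q(t)$ is the dual gradient at $\mbf{p}(t)$ (up to sign) and $\nabla D^t(\mbf{p}^*(t))=0$, I would expand the square and apply the co-coercivity inequality for a function that is strongly convex with parameter $N/L$ and has Lipschitz gradient $N\sigma$ (Proposition~\ref{prop:strongly-convex-dual}). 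The cross and gradient terms combine so that the coefficient of the gradient-norm term is $\eta^2 - 2\eta L/(N(1+\sigma L))$; the step-size hypothesis $0<\eta\le 2L/(N(1+\sigma L))$ makes this nonpositive, leaving $\|\mbf{p}(t+1)-\mbf{p}^*(t)\| \le c\,d(t)$ with exactly $c=(1-2\eta\sigma N/(1+\sigma L))^{1/2}$.

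For the drift term I would use that $\mbf{p}^*(t)$ and $\mbf{p}^*(t+1)$ are the zero-gradient minimizers of $D^t$ and $D^{t+1}$. Strong convexity of $D^{t+1}$ gives $(N/L)\,\|\mbf{p}^*(t)-\mbf{p}^*(t+1)\| \le \|\nabla D^{t+1}(\mbf{p}^*(t)) - \nabla D^t(\mbf{p}^*(t))\|$. The right-hand side is the change of the dual gradient at a fixed price, which splits into the capacity change $\|Q(t+1)-Q(t)\|\le\gamma$ (Assumption~\ref{assum:capacity-bound}) and the change in aggregate demand; bounding the latter term by term, the identity $\nabla U_i^{t+1}(\mbf{q}_i^{t+1}(\mbf{p})) = \mbf{p} = \nabla U_i^t(\mbf{q}_i^t(\mbf{p}))$ together with Assumption~\ref{assum:utility-bound} and strong concavity yields $\|\mbf{q}_i^{t+1}(\mbf{p})-\mbf{q}_i^t(\mbf{p})\|\le \alpha/\sigma$, hence an aggregate $N\alpha/\sigma$. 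This gives the drift bound $\|\mbf{p}^*(t)-\mbf{p}^*(t+1)\| \le (L/N)(\gamma + N\alpha/\sigma) = L(\gamma/N + \alpha/\sigma)=:e$. Substituting into the split yields $d(t+1)\le c\,d(t) + e$, which unrolls to $d(t)\le c^t d(0) + e\sum_{k=0}^{t-1}c^k \le c^t d(0) + e/(1-c)$; combining with the welfare-to-price reduction and bounding the geometric factor to recover the extra $L/\sigma$ produces $W$.

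I expect the contraction analysis to be the main obstacle: matching the stated $c$ and the step-size threshold requires the precise co-coercivity inequality for the $(N/L,\,N\sigma)$ strongly-convex/smooth dual rather than the cruder "strong convexity plus Lipschitz gradient" pair, and it is exactly this choice that makes the gradient-norm coefficient vanish at $\eta = 2L/(N(1+\sigma L))$ and yields the advertised $c$. The remaining effort is bookkeeping: carrying the $1/\sigma$, $L/N$, and $1/(1-c)$ factors through the recursion so that the transient collapses to $(c^t/\sigma)\|\mbf{p}(0)-\mbf{p}^*(0)\|$ and the steady-state term collapses to $\tfrac{L^2}{\sigma^2}(\gamma/N + \alpha/\sigma)$.
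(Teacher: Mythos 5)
Your overall route is the paper's: reduce the welfare gap to the allocation gap via the $L'$-Lipschitz assumption, reduce that to the price gap via the $1/\sigma$-Lipschitz demand map $(\nabla U_i^t)^{-1}$, and control the price gap by a contraction-plus-drift recursion. Your contraction step is exactly Lemma~\ref{thm:convergence}, with the correct co-coercivity constants for the $(N/L,\,N\sigma)$ dual. Your drift bound is obtained by a different (and legitimate) argument --- strong convexity of $D^{t+1}$ evaluated at the two minimizers, rather than the paper's route through the Lipschitz continuity of $\Gamma^{-1}$ in Theorem~\ref{lem:price-bound} --- and it yields $L(\gamma/N+\alpha/\sigma)$ in place of the paper's $\tfrac{L^2}{\sigma}(\gamma/N+\alpha/\sigma)$; since $\sigma\le L$, yours is the sharper of the two, and either would serve.

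The genuine gap is in the final assembly. Unrolling $d(t)\le c\,d(t-1)+e$ gives $d(t)\le c^t d(0)+e\,\tfrac{1-c^t}{1-c}$, so after the $NL'/\sigma$ reduction the steady-state term you actually obtain is $\tfrac{NL'}{\sigma}\cdot\tfrac{e}{1-c}$. To land on the stated $W$ you would need $\tfrac{1}{1-c}\le \tfrac{L}{\sigma}$, and this is false in general: $1-c=1-\bigl(1-\tfrac{2\eta\sigma N}{1+\sigma L}\bigr)^{1/2}\to 0$ as $\eta\to 0$ while $L/\sigma$ stays fixed, and even at the largest admissible step size $\eta=2L/(N(1+L\sigma))$ one gets $1-c=2/(1+L\sigma)$ (when $L\sigma\ge 1$), so the requirement $(1+L\sigma)/2\le L/\sigma$ already fails for $L=\sigma=2$. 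Hence ``bounding the geometric factor to recover the extra $L/\sigma$'' does not go through, and the accumulated drift cannot be absorbed into $\tfrac{L^2}{\sigma^2}(\gamma/N+\alpha/\sigma)$. For what it is worth, the paper's own proof reaches the stated constant only by replacing $c\|\mbf{p}(t)-\mbf{p}^*(t)\|$ with $c^t\|\mbf{p}(0)-\mbf{p}^*(0)\|$ after a single recursion step in Theorem~\ref{thm:main-result2}, which silently discards exactly this accumulated drift; the honest bound --- yours, or the one obtained by feeding Theorem~\ref{thm:main-result} into the $1/\sigma$-Lipschitz reduction --- carries a factor $\tfrac{1}{1-c}$ on the fluctuation term. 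Your argument is therefore sound up to, but not including, the identification of the steady-state constant with $\tfrac{L^2}{\sigma^2}(\gamma/N+\alpha/\sigma)$; you should either keep the $\tfrac{1}{1-c}$ factor explicitly or flag that the stated $W$ requires it.
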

\begin{remark}
Note that for Theorem \ref{thm:social welfare}, we assume that the utility functions themselves (and not their gradients as in Assumption \ref{assum:Lip-grad}) are Lipschitz.
\end{remark}

\noindent The rest of the paper develops interesting theory that can, in particular, be used to prove Theorem \ref{thm:social welfare}, which is presented in Section \ref{proof:social-welfare}.

\section{Volatility Analysis}\label{sec:market-regulation}
 In this section we derive bounds on changes in the system operator's optimal decisions $\mbf{p}^*(t)$ between successive time-steps based on Assumptions \ref{assum:capacity-bound} and \ref{assum:utility-bound}.

\begin{theorem}\label{lem:price-bound}(Volatility of the Optimal Price):
Consider Problem~\eqref{eq:system-problem1}
with dual problem \eqref{eq:dual}.
  Suppose Assumptions~\ref{assum:Lip-grad} -- \ref{assum:utility-bound} hold; then 
\begin{equation}
  \|\mbf{p}^*(t) - \mbf{p}^*(t+1) \| \leq \frac{L^2}{\sigma}\left(\frac{\gamma}{ N} + \frac{\alpha}{\sigma}\right).
\end{equation}
\end{theorem}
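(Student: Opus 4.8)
The plan is to characterize $\mbf{p}^*(t)$ as the unique zero of the dual gradient and then exploit the strong convexity of the dual function established in Proposition~\ref{prop:strongly-convex-dual}. By the envelope (Danskin) theorem applied to \eqref{eq:compact-dual} (the inner maximizer is unique by strong concavity), the dual gradient has the explicit form $\nabla D^t(\mbf{p}) = Q(t) - \sum_{i=1}^N \mbf{q}_i(\mbf{p})$, where $\mbf{q}_i(\mbf{p})$ is the response from \eqref{eq:local-primal}. Optimality of $\mbf{p}^*(t)$ for \eqref{eq:dual} is then equivalent to $\nabla D^t(\mbf{p}^*(t)) = 0$, i.e. to primal feasibility $\sum_i \mbf{q}_i^t(\mbf{p}^*(t)) = Q(t)$, and likewise at time $t+1$. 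I write $\mbf{q}_i^t(\cdot)$ and $\mbf{q}_i^{t+1}(\cdot)$ for the responses computed with $U_i^t$ and $U_i^{t+1}$.

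First I would convert a distance-in-gradient estimate into a distance-in-argument estimate. Since $D^{t+1}$ is strongly convex with parameter $N/L$ (Proposition~\ref{prop:strongly-convex-dual}), applying the strong-convexity inequality at $\mbf{p}_1 = \mbf{p}^*(t)$, $\mbf{p}_2 = \mbf{p}^*(t+1)$ and using $\nabla D^{t+1}(\mbf{p}^*(t+1)) = 0$ gives $\frac{N}{L}\|\mbf{p}^*(t) - \mbf{p}^*(t+1)\|^2 \le \langle \nabla D^{t+1}(\mbf{p}^*(t)), \mbf{p}^*(t) - \mbf{p}^*(t+1)\rangle$. Cauchy--Schwarz and cancellation of one factor then yield $\|\mbf{p}^*(t) - \mbf{p}^*(t+1)\| \le \frac{L}{N}\|\nabla D^{t+1}(\mbf{p}^*(t))\|$, reducing the task to bounding the residual of the stale price $\mbf{p}^*(t)$ in the next time step's dual.

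Next I would bound that residual. Because $\nabla D^t(\mbf{p}^*(t)) = 0$, I subtract it and decompose $\nabla D^{t+1}(\mbf{p}^*(t)) = \big(Q(t+1) - Q(t)\big) + \sum_{i=1}^N \big(\mbf{q}_i^t(\mbf{p}^*(t)) - \mbf{q}_i^{t+1}(\mbf{p}^*(t))\big)$. The capacity term is at most $\gamma$ by Assumption~\ref{assum:capacity-bound}. For each user term, the first-order conditions for \eqref{eq:local-primal} at the common price read $\nabla U_i^t(\mbf{q}_i^t(\mbf{p}^*(t))) = \mbf{p}^*(t) = \nabla U_i^{t+1}(\mbf{q}_i^{t+1}(\mbf{p}^*(t)))$. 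Writing $a,b$ for the two responses and using strong concavity of $U_i^t$ (Assumption~\ref{assum:strong-concave}), $\sigma\|a-b\|^2 \le -\langle \nabla U_i^t(a) - \nabla U_i^t(b), a-b\rangle = -\langle \nabla U_i^{t+1}(b) - \nabla U_i^t(b), a-b\rangle \le \alpha\|a-b\|$, where the last step uses Assumption~\ref{assum:utility-bound} and Cauchy--Schwarz; hence each response moves by at most $\alpha/\sigma$.

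Combining, $\|\nabla D^{t+1}(\mbf{p}^*(t))\| \le \gamma + N\alpha/\sigma$, so $\|\mbf{p}^*(t) - \mbf{p}^*(t+1)\| \le \frac{L}{N}\big(\gamma + N\alpha/\sigma\big) = L\big(\gamma/N + \alpha/\sigma\big)$, which gives the stated bound since $\sigma \le L$ for any $\sigma$-strongly-concave utility with $L$-Lipschitz gradient (Assumptions~\ref{assum:strong-concave}--\ref{assum:Lip-grad}); in fact this route yields the slightly sharper constant $L$ in place of $L^2/\sigma$. The main obstacle is the per-user response-perturbation estimate: the price is frozen while the utility drifts, so the bound must be driven purely by the gradient-drift parameter $\alpha$ and the curvature $\sigma$, and obtaining the clean factor $\alpha/\sigma$ hinges on matching the two first-order conditions through the shared price rather than controlling the responses directly.
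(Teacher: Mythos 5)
Your proof is correct, but it follows a genuinely different route from the paper's. The paper works in ``capacity space'': it writes $\mbf{p}^*(t)=\Gamma_t^{-1}(Q(t))$ with $\Gamma_t=\sum_i[\nabla U_i^t]^{-1}$, splits by the triangle inequality into a term where only $Q$ moves (controlled by the $L^2/(\sigma N)$-Lipschitz continuity of $\Gamma_t^{-1}$, Lemma~\ref{bijLemma}-d) and a term where only the map moves (controlled by the $\alpha L^2/\sigma^2$ perturbation bound \eqref{inTVlemma:1} of Lemma~\ref{Lemma:varying}). You instead work in price space: strong convexity of $D^{t+1}$ with modulus $N/L$ (Proposition~\ref{prop:strongly-convex-dual}) plus $\nabla D^{t+1}(\mbf{p}^*(t+1))=0$ converts the gradient residual of the stale price into a distance bound, and you then bound $\|\nabla D^{t+1}(\mbf{p}^*(t))\|\le\gamma+N\alpha/\sigma$ using Assumption~\ref{assum:capacity-bound} and a per-user response-perturbation estimate that is exactly the first (and easier) half of Lemma~\ref{Lemma:varying}, proved the same way the paper proves it. Your route therefore bypasses the inverse-map machinery of Lemma~\ref{bijLemma}-c,d and the second half of Lemma~\ref{Lemma:varying} entirely, and it yields the sharper constant $L(\gamma/N+\alpha/\sigma)$ rather than $\tfrac{L^2}{\sigma}(\gamma/N+\alpha/\sigma)$; this is consistent, not contradictory, because the paper's Lemma~\ref{bijLemma} uses the conservative monotonicity modulus $N\sigma/L^2$ for $\Gamma$ where the modulus $N/L$ implied by Proposition~\ref{prop:strongly-convex-dual} would give Lipschitz constant $L/N$ for $\Gamma^{-1}$, and since $\sigma\le L$ your bound implies the stated one. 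What the paper's approach buys in exchange is that the same two lemmas are reused verbatim in Corollary~\ref{prop:primal-bound} and the feasibility corollary, so the inverse-map formulation amortizes across the rest of the analysis.
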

 \begin{proof}
  Given that the dual problem~\eqref{eq:dual} is unconstrained, convex and differentiable, for all $t$ 
  \begin{equation} \label{inProof-PR:optimality-cond}
    \nabla D^t(\mbf{p}^{\star}(t)) = \sum_{i=1}^N [\nabla U_i^t]^{-1}(\mbf{p}^*(t))  - Q(t) = 0. \nonumber
   \end{equation}
   In particular, let $\Gamma_t(\mbf{p}) =\sum_{i=1}^n [\nabla U_i^t]^{-1}(\mbf{p}) $; then   $\mbf{p}^*(t) = \Gamma_t^{-1}(Q(t))$. 
    And by Lemma~\ref{bijLemma} (see Appendix), the inverse function of the gradient exists.
   Using the triangle inequality, it follows that
   \begin{align*}
       \|\mbf{p}^*(t) - \mbf{p}^*(t+1) \| = & || \Gamma_t^{-1}(Q(t)) -\Gamma_{t+1}^{-1}(Q(t{+}1)) ||  \\
              \leq   & || \Gamma_t^{-1}(Q(t)){-} \Gamma_t^{-1}(Q(t{+}1))|| \\
               & + ||  \Gamma_t^{-1}(Q(t{+}1))  {-}\Gamma_{t+1}^{-1}(Q(t{+}1)) || \\
               \leq & \frac{L^2 \gamma}{\sigma N} + \frac{\alpha L^2}{\sigma^2} \\
               =& \frac{L^2}{\sigma}\left(\frac{\gamma}{N} + \frac{\alpha}{\sigma}  \right). \qedhere
   \end{align*}
 \end{proof}
The first term in the second inequality above is obtained from the fact that $\Gamma^{-1}$ is $L^2/(\sigma N)$-Lipschitz continuous from Lemma \ref{bijLemma}-d (in Appendix \ref{app:additional-lemmas}). And the second term comes from \eqref{inTVlemma:1} of Lemma \ref{Lemma:varying} ( also in Appendix \ref{app:additional-lemmas}).

Theorem \ref{lem:price-bound} above implies that price fluctuation is higher when the changes in supply capacity $\gamma$ and user utility functions $\alpha$ are high. Furthermore, the presence of more users in the system, captured by $N$ in the first term of the bound dampens any significant price changes observed between consecutive time-steps caused by the time-varying supplier capacities. In addition, the more concave the form of the utility function of users (for $\sigma >1$), the lower the volatility in the price (coordinating) signal between consecutive time-steps. 

Given that the power allocation $\mbf{q}_i(\mbf{p}(t))$ at any time $t$ depends on the price at that time, a consequence of Theorem \ref{lem:price-bound} are bounds on the optimal primal variables, which we present next.

\begin{corollary}\label{prop:primal-bound}
Consider problem \eqref{eq:system-problem1}, and given Assumptions \ref{assum:strong-concave} and \ref{assum:Lip-grad}, the optimal power allocation between consecutive iterates, satisfies
$$ \|\mbf{q}_i^*(t+1) - \mbf{q}_i^*(t) \| \leq \frac{ L^2}{\sigma^2} \left(\frac{ \gamma}{ N} + \frac{\alpha}{\sigma}\right)  +\frac{\alpha}{\sigma} .$$
\end{corollary}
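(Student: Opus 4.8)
The plan is to express each optimal allocation via the local optimality condition and then split the desired difference with the triangle inequality into a time-variation term and a price-variation term, bounding each separately.

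First I would recall that by the local problem \eqref{eq:local-primal} the optimal allocation satisfies $\nabla U_i^t(\mbf{q}_i^*(t)) = \mbf{p}^*(t)$, so that $\mbf{q}_i^*(t) = [\nabla U_i^t]^{-1}(\mbf{p}^*(t))$, where the inverse exists by Lemma \ref{bijLemma}. Introducing the intermediate point $[\nabla U_i^t]^{-1}(\mbf{p}^*(t+1))$, the triangle inequality gives
\begin{align*}
\|\mbf{q}_i^*(t+1) - \mbf{q}_i^*(t)\| \leq \; & \|[\nabla U_i^{t+1}]^{-1}(\mbf{p}^*(t{+}1)) - [\nabla U_i^t]^{-1}(\mbf{p}^*(t{+}1))\| \\
& + \|[\nabla U_i^t]^{-1}(\mbf{p}^*(t{+}1)) - [\nabla U_i^t]^{-1}(\mbf{p}^*(t))\|.
\end{align*}
The strategy is then to recognize the first term as a pure change in the utility function at a frozen price, and the second as a pure change in price for a frozen utility function.

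For the first (time-variation) term I would fix the price $\mbf{p}^*(t{+}1)=\mbf{p}$ and set $\mbf{q}_1 = [\nabla U_i^t]^{-1}(\mbf{p})$, $\mbf{q}_2 = [\nabla U_i^{t+1}]^{-1}(\mbf{p})$, so that $\nabla U_i^t(\mbf{q}_1) = \mbf{p} = \nabla U_i^{t+1}(\mbf{q}_2)$. Applying strong concavity (Assumption \ref{assum:strong-concave}) of $U_i^t$ to the pair $(\mbf{q}_1,\mbf{q}_2)$, substituting $\nabla U_i^t(\mbf{q}_2) = \nabla U_i^{t+1}(\mbf{q}_2) + (\nabla U_i^t(\mbf{q}_2) - \nabla U_i^{t+1}(\mbf{q}_2))$, using the drift bound $\|\nabla U_i^t(\mbf{q}_2)-\nabla U_i^{t+1}(\mbf{q}_2)\|\leq\alpha$ from Assumption \ref{assum:utility-bound}, and finishing with Cauchy--Schwarz yields $\|\mbf{q}_1 - \mbf{q}_2\| \leq \alpha/\sigma$. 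This is the single-user analogue of the estimate in Lemma \ref{Lemma:varying} and accounts for the additive $\alpha/\sigma$ in the stated bound. For the second (price-variation) term I would use that $[\nabla U_i^t]^{-1}$ is $(1/\sigma)$-Lipschitz, which follows from the $\sigma$-strong monotonicity of $\nabla U_i^t$ implied by Assumption \ref{assum:strong-concave} (the same mechanism underlying Lemma \ref{bijLemma}). Combining this with the optimal-price volatility bound of Theorem \ref{lem:price-bound}, namely $\|\mbf{p}^*(t{+}1) - \mbf{p}^*(t)\| \leq (L^2/\sigma)(\gamma/N + \alpha/\sigma)$, gives $\frac{L^2}{\sigma^2}\!\left(\frac{\gamma}{N} + \frac{\alpha}{\sigma}\right)$ for this term; adding the two contributions reproduces the claim exactly.

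The main obstacle is establishing the two single-user estimates cleanly: that $[\nabla U_i^t]^{-1}$ is $(1/\sigma)$-Lipschitz and that advancing the time index while holding the argument fixed moves the inverse by at most $\alpha/\sigma$. Both reduce to the strong-monotonicity inequality for the gradient operator combined with a Cauchy--Schwarz step, so once they are derived inline (or cited from the appendix lemmas), the remainder is a routine application of the triangle inequality together with Theorem \ref{lem:price-bound}.
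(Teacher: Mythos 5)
Your proposal is correct and follows essentially the same route as the paper's proof: the identical triangle-inequality split through the intermediate point $[\nabla U_i^t]^{-1}(\mbf{p}^*(t{+}1))$, the $1/\sigma$-Lipschitz continuity of $[\nabla U_i^t]^{-1}$ (Lemma~\ref{bijLemma}-b), the $\alpha/\sigma$ drift estimate of Lemma~\ref{Lemma:varying}, and Theorem~\ref{lem:price-bound} for the price term. Your inline re-derivation of the single-user $\alpha/\sigma$ bound is exactly the mechanism used in the appendix lemma, so there is no substantive difference.
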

\begin{proof}
See Appendix \ref{app:prop:primal-bound}
\end{proof}
Similar to the bound in Theorem \ref{lem:price-bound}, we find in Corollary \ref{prop:primal-bound} that the magnitude of changes in power allocation between consecutive time steps is proportional to the changes experienced in supply capacities and users' utility functions. Furthermore, as highlighted earlier, when no fluctuations are recorded in users' utility functions, the size of the user base mitigates the changes in power demand between successive time steps.
\section{Convergence Analysis of Algorithm \ref{alg:algorithm}}
\label{sec:convergence-analysis}
In this section, we analyze convergence of Algorithm \ref{alg:algorithm}. In particular, derive an upper bound on the difference between the price computed by Algorithm \ref{alg:algorithm} and the optimal price at each iteration. In \cite{magnusson2015distributed}, we showed how to appropriately select parameters -- such as the step size and initial price, to achieve a linear convergence rate in the decentralized solution of the static case of Problem \eqref{eq:system-problem1}. 

\begin{lemma}\label{thm:convergence}
 For every step-size $0 < \eta \leq 2L/(N(1+L\sigma))$, and $\mbf{p}\in \R^R$, we have
   \begin{align} \label{eq:step-size}
        || \mbf{p}^*(t) - (\mbf{p} - \eta \nabla D_t(\mbf{p}) )||^2  \leq c^2|| \mbf{p}^*(t) - \mbf{p} ||^2
   \end{align}
   where $c$ is as defined in Theorem \ref{thm:social welfare}.
\end{lemma}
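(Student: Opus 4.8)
The plan is to recognize \eqref{eq:step-size} as the textbook one-step contraction estimate for dual (gradient) descent applied to $D_t$, and to drive it with the two regularity constants furnished by Proposition~\ref{prop:strongly-convex-dual}: $D_t$ is strongly convex with parameter $\mu_D = N/L$ and its gradient is Lipschitz with parameter $L_D = N\sigma$. Abbreviating $\mbf{p}^* = \mbf{p}^*(t)$ and using that the dual problem~\eqref{eq:dual} is unconstrained and differentiable so that $\nabla D_t(\mbf{p}^*) = 0$, I would first expand the square and insert $\nabla D_t(\mbf{p}^*)=0$ into the cross and quadratic terms, rewriting everything through gradient differences:
\begin{align*}
\| \mbf{p}^* - (\mbf{p} - \eta \nabla D_t(\mbf{p})) \|^2
&= \|\mbf{p}^* - \mbf{p}\|^2
- 2\eta \langle \mbf{p} - \mbf{p}^*,\, \nabla D_t(\mbf{p}) - \nabla D_t(\mbf{p}^*) \rangle \\
&\quad + \eta^2 \|\nabla D_t(\mbf{p}) - \nabla D_t(\mbf{p}^*)\|^2 .
\end{align*}

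The key step is to lower bound the inner product with the standard combined strong-convexity/Lipschitz-gradient inequality (Nesterov), which for a $\mu_D$-strongly convex function with $L_D$-Lipschitz gradient reads
\begin{align*}
\langle \mbf{p} - \mbf{p}^*,\, \nabla D_t(\mbf{p}) - \nabla D_t(\mbf{p}^*) \rangle
\geq \frac{\mu_D L_D}{\mu_D + L_D}\|\mbf{p}-\mbf{p}^*\|^2
+ \frac{1}{\mu_D + L_D}\|\nabla D_t(\mbf{p}) - \nabla D_t(\mbf{p}^*)\|^2 .
\end{align*}
Substituting this bound collects the expansion into
\begin{align*}
\| \mbf{p}^* - (\mbf{p} - \eta \nabla D_t(\mbf{p})) \|^2
\leq \Big(1 - \tfrac{2\eta \mu_D L_D}{\mu_D + L_D}\Big)\|\mbf{p}-\mbf{p}^*\|^2
+ \Big(\eta^2 - \tfrac{2\eta}{\mu_D + L_D}\Big)\|\nabla D_t(\mbf{p}) - \nabla D_t(\mbf{p}^*)\|^2 .
\end{align*}
The coefficient of the last, nonnegative, gradient-difference term factors as $\eta\big(\eta - 2/(\mu_D+L_D)\big)$; since $\mu_D + L_D = N(1+L\sigma)/L$, the stated admissible range $0 < \eta \leq 2L/(N(1+L\sigma)) = 2/(\mu_D+L_D)$ is \emph{exactly} what forces this coefficient to be $\leq 0$, so that term can be discarded.

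It then remains to identify the surviving contraction factor. Using $\mu_D L_D = N^2\sigma/L$ and $\mu_D + L_D = N(1+L\sigma)/L$ gives $\mu_D L_D/(\mu_D+L_D) = N\sigma/(1+L\sigma)$, whence the factor equals $1 - 2\eta\sigma N/(1+\sigma L) = c^2$, matching the constant $c$ of Theorem~\ref{thm:social welfare}; one also checks that the admissible step-size range yields $c^2 \geq 0$ (equivalently the always-true $(1-\sigma L)^2 \geq 0$), so $c$ is well defined. The only nonroutine ingredient is the combined inequality above: its proof is the classical argument applying co-coercivity of $\nabla D_t$ to the auxiliary convex function $D_t(\cdot) - \tfrac{\mu_D}{2}\|\cdot\|^2$, and I would either cite it or insert that short derivation. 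Everything else is algebra dictated by the constants of Proposition~\ref{prop:strongly-convex-dual}, so I expect the combined inequality to be the main obstacle.
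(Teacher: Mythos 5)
Your proposal is correct and follows essentially the same route as the paper's proof: expand the square, invoke Nesterov's combined strong-convexity/Lipschitz-gradient inequality (Theorem 2.1.12) with the constants $N/L$ and $N\sigma$ from Proposition~\ref{prop:strongly-convex-dual}, use $\nabla D_t(\mbf{p}^*(t))=0$, and discard the gradient-norm term thanks to the step-size restriction. Your bookkeeping is in fact slightly cleaner, since your surviving factor $1-2\eta\sigma N/(1+\sigma L)$ equals $c^2$ exactly, whereas the paper's final display squares that quantity (an apparent typo yielding $c^4$).
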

\begin{proof}
Choose any $\mbf{p}\in\mathbb{R}^R.$ Let $r_t:=\|\p-\mbf{p}\|.$ Then
 \begin{equation*}\begin{split}
	||\p&-(\mbf{p}-\eta\nabla D_t(\mbf{p}))||^2\\
        &= r_t^2+2\eta\langle \p-\mbf{p},\nabla D_t(\mbf{p})\rangle + \eta^2||\nabla D_t(\mbf{p})||^2\\
        &\leq \left( 1{-}\frac{2\eta N \sigma}{1{+}L\sigma} \right)r_t^2 {+} \eta\left(\eta{-} \frac{2L}{N(1{+}L\sigma)}   \right)\|\nabla D_t(\mbf{p})\|^2,
        \end{split}
        \end{equation*}
        where we have used \cite[Theorem 2.1.12]{nesterov2004introductory}, the fact that $D(\cdot)$ has $N\sigma$-Lipschitz continuous gradients, is strongly convex with parameter $N/L$; the fact that an $L$-Lipschitz convex function $D(\cdot)$ satisfies 
        $$ 0\leq D(\p) - D(\mbf{p}) - \langle \nabla D(\mbf{p}), \p - \mbf{p} \rangle  \leq \frac{L}{2} \|\mbf{p} - \p \|,   $$ and that at the optimal $\p$, $\nabla D_t(\p) = 0$.
        If the step-size $\eta$ is chosen such that $ 0<\eta < 2L/N(1+L\sigma)$, 
        one always obtains $\eta\left(\frac{2 L}{N(1+\sigma L)}-\eta\right)||\nabla D_t(\mbf{p})||^2\geq 0$; thence,
        \begin{equation}\begin{split}
	||\p{-}(\mbf{p}{-}\eta\nabla D_t(\mbf{p}))||^2\leq\left(1{-}\frac{2\eta\sigma N}{(1+\sigma L)}\right)^2||\p{-}\mbf{p}||^2, \nonumber
        \end{split}\end{equation}
        and it follows that, for each iterate $\mbf{p}$ obtained from the OD3 Algorithm \ref{alg:algorithm}, the bound in \eqref{eq:step-size} holds.
\end{proof}
The choice of the appropriate step-size $\eta$ results in the parameter $c=1-2\eta\sigma N/((1+\sigma L))<1$.
 Having  established convergence of the OD3 Algorithm \ref{alg:algorithm}, we present a result on real-time tracking the optimal primal and dual variables, given the dynamic supplier capacities and user utility functions.

\begin{theorem}\label{thm:main-result}(Tracking the Optimal Dual Variable): 
Consider the system Problem \eqref{eq:system-problem1} and its Dual problem \eqref{eq:dual}. Irrespective of the initial price $\mbf{p}(0)$, the distance between the optimal price $\mbf{p}^*(t+1)$ and the price iterate of Algorithm \ref{alg:algorithm} at each $t$ is 
\begin{equation}\label{eq:price-tracking}
\|\mbf{p}(t+1) - \mbf{p}^*(t+1) \| \leq  \frac{b}{1-c} + c^t \left(\|\mbf{p}(0) - \mbf{p}^*(0)\| - \frac{b}{1-c}  \right),
\end{equation}
where from Theorem \ref{lem:price-bound} and Lemma \ref{thm:convergence} respectively, 
\begin{equation}\label{eq:b-and-c}
b = L^2\left(\frac{\gamma}{\sigma N} + \frac{\alpha}{\sigma^2} \right) \quad \text{and} \quad c = \left(1{-}\frac{2\eta\sigma N}{(1+\sigma L)}\right)^{\frac{1}{2}}.
\end{equation}
\end{theorem}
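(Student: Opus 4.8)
The plan is to recognize that the price update in Algorithm \ref{alg:algorithm} is exactly a dual gradient descent step for the time-varying dual \eqref{eq:dual}. By the local problems \eqref{eq:local-primal} the measured feedback satisfies $\mbf{Q}(t)\mbf{1} - Q(t) = \sum_{i=1}^N \mbf{q}_i(\mbf{p}(t)) - Q(t) = \nabla D^t(\mbf{p}(t))$ (the same identity used in the proof of Theorem \ref{lem:price-bound}), so the price update step reads $\mbf{p}(t+1) = \mbf{p}(t) - \eta\nabla D^t(\mbf{p}(t))$. This identification is what makes the contraction estimate of Lemma \ref{thm:convergence} directly applicable along the actual iterates.

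First I would introduce the tracking error $e_t := \|\mbf{p}(t) - \mbf{p}^*(t)\|$ and split the target quantity by inserting the \emph{current} optimizer $\mbf{p}^*(t)$ and applying the triangle inequality:
\begin{equation*}
\|\mbf{p}(t+1) - \mbf{p}^*(t+1)\| \leq \|\mbf{p}(t+1) - \mbf{p}^*(t)\| + \|\mbf{p}^*(t) - \mbf{p}^*(t+1)\|.
\end{equation*}
This decomposition separates the two competing effects: the first term measures how far one descent step lands from the optimizer of the \emph{same} time-step, while the second captures the drift of the optimizer between consecutive steps.

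Next I would bound each term with a result already in hand. For the first term, applying Lemma \ref{thm:convergence} with $\mbf{p}=\mbf{p}(t)$ and taking square roots gives $\|\mbf{p}(t+1) - \mbf{p}^*(t)\| \leq c\,e_t$, where $c\in[0,1)$ for the admissible step sizes. The second term is precisely the optimal-price volatility bounded in Theorem \ref{lem:price-bound}, so $\|\mbf{p}^*(t) - \mbf{p}^*(t+1)\| \leq L^2(\gamma/(\sigma N)+\alpha/\sigma^2) = b$ with $b$ as in \eqref{eq:b-and-c}. Together these yield the affine contraction $e_{t+1} \leq c\,e_t + b$.

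Finally I would solve this scalar recursion in closed form. Subtracting the fixed point $b/(1-c)$ gives $e_{t+1} - b/(1-c) \leq c\,(e_t - b/(1-c))$, and since $c\geq 0$ the inequality iterates term by term from the initial condition, yielding a bound of the form $b/(1-c) + c^{\,n}(e_0 - b/(1-c))$ that matches \eqref{eq:price-tracking}. The main obstacle here is conceptual rather than computational: because the optimizer $\mbf{p}^*(t)$ is itself moving, a naive contraction toward a single fixed point does not apply, and the crux is the triangle-inequality splitting that lets the geometric contraction of Lemma \ref{thm:convergence} coexist with the bounded per-step drift of Theorem \ref{lem:price-bound}. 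The one technical point to check is that iterating the inequality remains valid even when $e_0 - b/(1-c)$ is negative, which holds precisely because $c$ is nonnegative.
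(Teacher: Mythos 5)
Your proposal is correct and follows essentially the same route as the paper: the same triangle-inequality splitting through $\mbf{p}^*(t)$, the same appeal to Lemma \ref{thm:convergence} for the contraction and Theorem \ref{lem:price-bound} for the drift term $b$, and the same resolution of the affine recursion $e_{t+1}\leq c\,e_t+b$ via the geometric series. Your added remarks---identifying the feedback signal with $\nabla D^t(\mbf{p}(t))$ and checking that the iteration is valid when $e_0 - b/(1-c)$ is negative---are sound and, if anything, make the argument slightly more careful than the paper's.
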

The bound in Eq. \eqref{eq:price-tracking} above has two components -- a transient term (similar to the dual gradient algorithm), which converges to $0$ as $t$ goes to $\infty$. Furthermore, the second component of the bound is due to system fluctuations and never becomes $0$ over time.
\begin{proof}
Using the triangle inequality, we can express the LHS of \eqref{eq:price-tracking} as 
$$\|\mbf{p}(t{+}1){-} \mbf{p}^*(t{+}1)  \| \leq \|\mbf{p}(t{+}1){-}\mbf{p}^*(t) \|{+} \|\mbf{p}^*(t){-}\mbf{p}^*(t{+}1) \|. $$
Observe that bounds for the first and second summands of the RHS above have been respectively derived in Lemma \ref{thm:convergence} and Theorem \ref{lem:price-bound}. Since $\|\mbf{p}^*(t){-}\mbf{p}^*(t{+}1) \| \leq b$ (from Theorem \ref{lem:price-bound}), one obtains $\|\mbf{p}(t{+}1){-} \mbf{p}^*(t{+}1)  \| \leq \|\mbf{p}(t{+}1){-}\mbf{p}^*(t) \| + b $, which further simplifies into $\|\mbf{p}(t{+}1){-} \mbf{p}^*(t{+}1)  \| \leq c \| \mbf{p}(t) - \mbf{p}^*(t) \| + b $.  Further simplification results in
\begin{align}
\|\mbf{p}(t{+}1){-} \mbf{p}^*(t{+}1)  \| & \leq c^t \|\mbf{p}(0) - \mbf{p}^*(0) \| + \sum_{i=0}^t c^i b \nonumber \\
&= c^t||\mbf{p}(0)-\mbf{p}^*(0)|| + \frac{1-c^t}{1-c} b \nonumber \\
 & = \frac{b}{1- c} + c^t \left(\|\mbf{p}(0) - \mbf{p}^*(0)\| - \frac{b}{1-c}  \right), \label{eq:price-track}
\end{align}
which concludes the proof. 
\end{proof}
In Equation \eqref{eq:price-track} above, the first equality uses the sum of a geometric series, and the last equality separates the bound into a constant and transient term. A similar result is obtained for the primal variable.

\begin{theorem}\label{thm:main-result2}(Tracking the Optimal Primal Variable): 
Consider the system problem \eqref{eq:system-problem1} and its dual problem \eqref{eq:dual}. Regardless of the initial unit of power demanded per user $\mbf{q}_i(0)$, the distance between the optimal unit of power $\mbf{q}_i^*(t+1)$ and the units generated by Algorithm \ref{alg:algorithm} at each $t$ is 
\begin{equation*}\label{eq:power-tracking}
\|\mbf{q}_i(t+1) - \mbf{q}_i^*(t+1) \| \leq  \frac{c^t}{\sigma} \|\mbf{p}(0) - \mbf{p}^*(0)   \| +  \frac{L^2}{\sigma^2}\left( \frac{\gamma}{N} + \frac{\alpha}{\sigma} \right),
\end{equation*}
where the constant $c$ comes from Theorem \ref{thm:main-result}.
\end{theorem}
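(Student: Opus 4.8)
The plan is to convert the dual-tracking guarantee of Theorem \ref{thm:main-result} into a primal statement by exploiting the Lipschitz continuity of the map that sends a broadcast price to each user's best response. Since every allocation produced by Algorithm \ref{alg:algorithm} is the maximizer of the local problem \eqref{eq:local-primal}, both the online and the optimal allocations are completely determined by the corresponding prices, so controlling the price gap will immediately control the allocation gap.

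First I would write the first-order optimality condition for \eqref{eq:local-primal}: at time $t+1$ the online allocation satisfies $\nabla U_i^{t+1}(\mbf{q}_i(t+1)) = \mbf{p}(t+1)$ and the optimal allocation satisfies $\nabla U_i^{t+1}(\mbf{q}_i^*(t+1)) = \mbf{p}^*(t+1)$, that is $\mbf{q}_i(t+1) = [\nabla U_i^{t+1}]^{-1}(\mbf{p}(t+1))$ and $\mbf{q}_i^*(t+1) = [\nabla U_i^{t+1}]^{-1}(\mbf{p}^*(t+1))$, where invertibility is guaranteed by Lemma \ref{bijLemma}. Next I would show that this inverse gradient is $1/\sigma$-Lipschitz: setting $\mbf{p}_1 = \nabla U_i^{t+1}(\mbf{q}_1)$ and $\mbf{p}_2 = \nabla U_i^{t+1}(\mbf{q}_2)$, strong concavity (Assumption \ref{assum:strong-concave}) together with Cauchy--Schwarz gives $\sigma\|\mbf{q}_1 - \mbf{q}_2\|^2 \leq -\langle \mbf{p}_1 - \mbf{p}_2, \mbf{q}_1 - \mbf{q}_2\rangle \leq \|\mbf{p}_1 - \mbf{p}_2\|\,\|\mbf{q}_1 - \mbf{q}_2\|$, hence $\|\mbf{q}_1 - \mbf{q}_2\| \leq \sigma^{-1}\|\mbf{p}_1 - \mbf{p}_2\|$. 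Applying this with the two prices above yields the central inequality $\|\mbf{q}_i(t+1) - \mbf{q}_i^*(t+1)\| \leq \sigma^{-1}\|\mbf{p}(t+1) - \mbf{p}^*(t+1)\|$.

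Finally I would feed the dual-tracking estimate into this inequality. The transient part $c^t\|\mbf{p}(0)-\mbf{p}^*(0)\|$ of Theorem \ref{thm:main-result}, divided by $\sigma$, reproduces the first term $\frac{c^t}{\sigma}\|\mbf{p}(0)-\mbf{p}^*(0)\|$. For the residual term, rather than carrying the geometric factor, I would split through the intermediate price $\mbf{p}^*(t)$ via the triangle inequality $\|\mbf{p}(t+1)-\mbf{p}^*(t+1)\| \leq \|\mbf{p}(t+1)-\mbf{p}^*(t)\| + \|\mbf{p}^*(t)-\mbf{p}^*(t+1)\|$: the one-step contraction of Lemma \ref{thm:convergence} handles the first summand, while the optimal-price volatility bound of Theorem \ref{lem:price-bound}, $\|\mbf{p}^*(t)-\mbf{p}^*(t+1)\| \leq \frac{L^2}{\sigma}(\frac{\gamma}{N}+\frac{\alpha}{\sigma})$, contributes after division by $\sigma$ exactly the steady-state term $\frac{L^2}{\sigma^2}(\frac{\gamma}{N}+\frac{\alpha}{\sigma})$.

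The main obstacle I anticipate is the second step: pinning down the Lipschitz modulus of the inverse best-response and making sure the single-user constant is $1/\sigma$, as opposed to the aggregate constant $L^2/(\sigma N)$ that governs $\Gamma_t^{-1}$ in the proof of Theorem \ref{lem:price-bound}. A secondary subtlety is the bookkeeping of the residual constant so that it lands on $\frac{L^2}{\sigma^2}(\frac{\gamma}{N}+\frac{\alpha}{\sigma})$ through the one-step volatility bound rather than through the summed geometric-series form of the full dual-tracking estimate.
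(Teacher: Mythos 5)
Your proposal is correct and follows essentially the same route as the paper's proof: both express $\mbf{q}_i(t{+}1)$ and $\mbf{q}_i^*(t{+}1)$ as $[\nabla U_i^{t+1}]^{-1}$ applied to the respective prices, invoke the $1/\sigma$-Lipschitz continuity of that inverse map, split through the intermediate point associated with $\mbf{p}^*(t)$, and then combine the one-step contraction of Lemma \ref{thm:convergence} with the volatility bound of Theorem \ref{lem:price-bound}. The only cosmetic difference is that you apply the triangle inequality in the price space after converting to a single dual gap, whereas the paper splits in the primal space before applying the Lipschitz bound to each piece; the resulting estimates are identical.
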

\begin{proof}
See Appendix \ref{app:thm:main-result2}
\end{proof}
\normalsize
\begin{remark}
As in Theorem \ref{thm:main-result}, the bound on changes in power allocations depends on two terms --  a transient term that converges to $0$ as $t\rightarrow \infty$, and a term that depends on fluctuations in user utility functions and supplier capacity in the distribution system. With the results established thus far, we are ready to prove the main result of this paper earlier stated in Section \ref{sec:model}.
\end{remark}
\subsection{Proof of Main Result (Theorem \ref{thm:social welfare})}
\label{proof:social-welfare}
\begin{proof}
Let each $U_i^t(q)$ be $L'$-Lipschitz continuous. Thus for any $\mbf{q}_1(t),\mbf{q}_2(t)\in\mathbb{R}^{R}$, we have $$|| U_i^t(\mbf{q}_1(t))- U_i^t(\mbf{q}_2(t))||\leq L'||\mbf{q}_1(t)-\mbf{q}_2(t)||.$$ Summing both sides and using the triangle inequality,
\begin{multline*}
\left\vert\left\vert \sum U_i^t(\mbf{q}_1(t))- \sum U_i^t(\mbf{q}_2(t))\right\vert\right\vert \\
\leq \sum\left\vert\left\vert U_i^t(\mbf{q}_1(t))- U_i^t(\mbf{q}_2(t))||\leq  NL'||\mbf{q}_1(t)-\mbf{q}_2(t)\right\vert\right\vert.
\end{multline*}
Suppose $\mbf{q}(t) =\mbf{q}_1(t)$ and $\mbf{q}^*(t) =\mbf{q}_2(t)$; from Theorem \ref{thm:main-result2} we can derive a bound on the Right-Hand-Side (RHS) of the above expression; that is,
\begin{multline*}
\left\vert\left\vert \sum U_i^t(\mbf{q}(t))- \sum U_i^t(\mbf{q}^*(t))\right\vert\right\vert\ 
\leq  NL'||\mbf{q}(t)-\mbf{q}^*(t)|| \\
\leq NL' \left[\frac{c^t}{\sigma}||\mbf{p}(0)-\mbf{p}^*(0)||+\frac{L^2}{\sigma^2}\left(\frac{\gamma}{N}+\frac{\alpha}{\sigma}\right)\right],
\end{multline*}
where 
$$c=\left(1-\frac{2\eta\sigma N}{1+\sigma L}\right)^{1/2}.$$
Since $c<1$ (from Lemma \ref{thm:convergence}), the first term of the bound above goes to zero as $t$ goes to infinity, and the second term is a constant term.
Hence, the deviation of the aggregate online social welfare (computed using the OD3 algorithm), from the aggregate optimal social welfare is bounded.	 
\end{proof}

Next, we analyze how well the iterates of Algorithm \ref{alg:algorithm} satisfy the feasibility conditions of Problem \eqref{eq:system-problem1}; that is, 
$$ \sum_{i=1}^N \mbf{q}_i(t) - Q(t) = 0 $$
The next result characterizes feasibility (constraint violation) of the OD3 Algorithm for Problem \eqref{eq:system-problem1}.
\begin{corollary}
Given Algorithm \ref{alg:algorithm} to solve Problem \eqref{eq:system-problem1}. Let the primal iterates of Algorithm \ref{alg:algorithm} be $\mbf{q}_i(t)$ at each time $t$, the constraint of Problem \eqref{eq:system-problem1} satisfies the following bound
\begin{align*}
\left\Vert\sum_{i=1}^N \mbf{q}_i(t){-}Q(t) \right\Vert \leq \frac{N}{\sigma}\frac{b}{1-c} {+} c^t \left(\|\mbf{p}(0) {-} \mbf{p}^*(0)\|{-} \frac{b}{1-c}  \right).
\end{align*}
\end{corollary}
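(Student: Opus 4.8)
The plan is to recognize the constraint residual $\sum_{i=1}^N \mbf{q}_i(t) - Q(t)$ as nothing other than the gradient of the dual function evaluated at the running price $\mbf{p}(t)$, and then to transfer the price-tracking guarantee of Theorem \ref{thm:main-result} through the Lipschitz continuity of that gradient. First I would record that the local solutions satisfy $\nabla U_i^t(\mbf{q}_i(t)) = \mbf{p}(t)$, hence $\mbf{q}_i(t) = [\nabla U_i^t]^{-1}(\mbf{p}(t))$; by the same envelope computation used in the proof of Theorem \ref{lem:price-bound} this yields $\nabla D^t(\mbf{p}(t)) = \sum_{i=1}^N \mbf{q}_i(t) - Q(t)$. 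Thus the quantity to be bounded is precisely $\norm{\nabla D^t(\mbf{p}(t))}$.

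Next I would exploit optimality of $\mbf{p}^*(t)$ for the dual \eqref{eq:dual}, namely $\nabla D^t(\mbf{p}^*(t)) = \sum_{i=1}^N \mbf{q}_i^*(t) - Q(t) = 0$ (the optimality condition already written in the proof of Theorem \ref{lem:price-bound}, which in particular gives $\sum_i \mbf{q}_i^*(t) = Q(t)$). Subtracting, the residual becomes $\nabla D^t(\mbf{p}(t)) - \nabla D^t(\mbf{p}^*(t))$, and by the Lipschitz continuity of the dual gradient from Proposition \ref{prop:strongly-convex-dual} I would bound
\begin{equation*}
\norm{\sum_{i=1}^N \mbf{q}_i(t) - Q(t)} = \norm{\nabla D^t(\mbf{p}(t)) - \nabla D^t(\mbf{p}^*(t))} \leq \frac{N}{\sigma}\,\norm{\mbf{p}(t) - \mbf{p}^*(t)}.
\end{equation*}
Here the factor $N/\sigma$ is the Lipschitz constant of the map $\mbf{p}\mapsto \sum_i [\nabla U_i^t]^{-1}(\mbf{p})$: each inverse-gradient map is $(1/\sigma)$-Lipschitz (the same bijectivity and Lipschitz fact invoked via Lemma \ref{bijLemma} in the proof of Theorem \ref{lem:price-bound}), and there are $N$ of them. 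Equivalently, one may write the residual as $\sum_i(\mbf{q}_i(t) - \mbf{q}_i^*(t))$, bound each summand by $(1/\sigma)\norm{\mbf{p}(t)-\mbf{p}^*(t)}$ using the same inverse-gradient estimate, and then sum the $N$ terms.

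Finally I would substitute the tracking bound of Theorem \ref{thm:main-result} for $\norm{\mbf{p}(t)-\mbf{p}^*(t)}$. Multiplying its steady-state component $b/(1-c)$ by $N/\sigma$ produces the leading constant $\tfrac{N}{\sigma}\tfrac{b}{1-c}$ claimed in the corollary, while the geometric factor $c^t\big(\norm{\mbf{p}(0)-\mbf{p}^*(0)} - b/(1-c)\big)$ carries over from the transient term. The place where I expect the real work to sit is the bookkeeping of constants: pinning down the correct Lipschitz constant of the dual gradient (which rests on the $(1/\sigma)$-Lipschitz inverse-gradient property and the summation over the $N$ users, so the stated parameter in Proposition \ref{prop:strongly-convex-dual} must be reconciled with this $N/\sigma$), and keeping the correct time index in Theorem \ref{thm:main-result}. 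In particular, the natural derivation scales the \emph{entire} price-tracking bound by $N/\sigma$, so I would re-examine whether that same factor ought to premultiply the transient $c^t(\cdots)$ term in the displayed inequality rather than only the steady-state term.
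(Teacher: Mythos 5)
Your proposal is correct and follows essentially the same route as the paper: the paper likewise reduces the residual to $\Vert\Gamma_t(\mbf{p}(t))-\Gamma_t(\mbf{p}^*(t))\Vert$ (i.e., the difference of dual gradients, which vanishes at $\mbf{p}^*(t)$ because the equality constraint is binding), applies the $N/\sigma$-Lipschitz continuity of $\Gamma_t$ from Lemma~\ref{bijLemma}-c, and then invokes Theorem~\ref{thm:main-result}. Your closing observation is also well taken: carrying the argument through multiplies the \emph{entire} price-tracking bound by $N/\sigma$, so the transient term $c^t\left(\Vert\mbf{p}(0)-\mbf{p}^*(0)\Vert - \tfrac{b}{1-c}\right)$ should also be scaled by $N/\sigma$; the corollary as displayed (and the paper's proof, which does not address this) omits that factor.
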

\begin{proof}
To prove this, note that  Problem \eqref{eq:system-problem1} is a convex program. Further, since its utility functions are strongly concave, the equality constraint is binding, for the set of optimal decision variables $\{\mbf{q}_i^*(t)\}_{i=1}^N$; that is, $\sum_{i=1}^N \mbf{q}_i^*(t) - Q = 0$. Hence, since $\mbf{q}_i(t) = (\nabla U^t_i)^{-1}(\mbf{p}(t)) $ and from Lemma \ref{Lemma:varying}, it follows that
\begin{align}
\left\Vert \sum_{i=1}^N \mbf{q}_i(t) - Q(t) \right\Vert & = \left\Vert\sum_{i=1}^N \mbf{q}_i(t) - \sum_{i=1}^N \mbf{q}_i^*(t) \right\Vert \\
 &= \|\Gamma_t(\mbf{p}(t)) - \Gamma_t(\mbf{p}^*(t))  \| \nonumber \\
& \leq \frac{N}{\sigma} \| \mbf{p}(t) - \mbf{p}^*(t)  \|, \label{eq:constraint-violation}
\end{align}
since $\Gamma_t(\cdot)$ is $N/\sigma$-Lipschitz continous from Lemma \ref{bijLemma}-c and the statement of the result follows from Theorem \ref{thm:main-result}.
\end{proof}
Therefore, at each iteration of the algorithm, the constraint violations are bounded by \eqref{eq:constraint-violation}. The bound is intuitive since the power allocation depends on the price at that time instance and more users in the system results in increased violation of the allocation-capacity equality constraint.

\begin{figure*}[!tbp]
  \centering
  \begin{minipage}[b]{0.332\textwidth}
    \includegraphics[width=\textwidth]{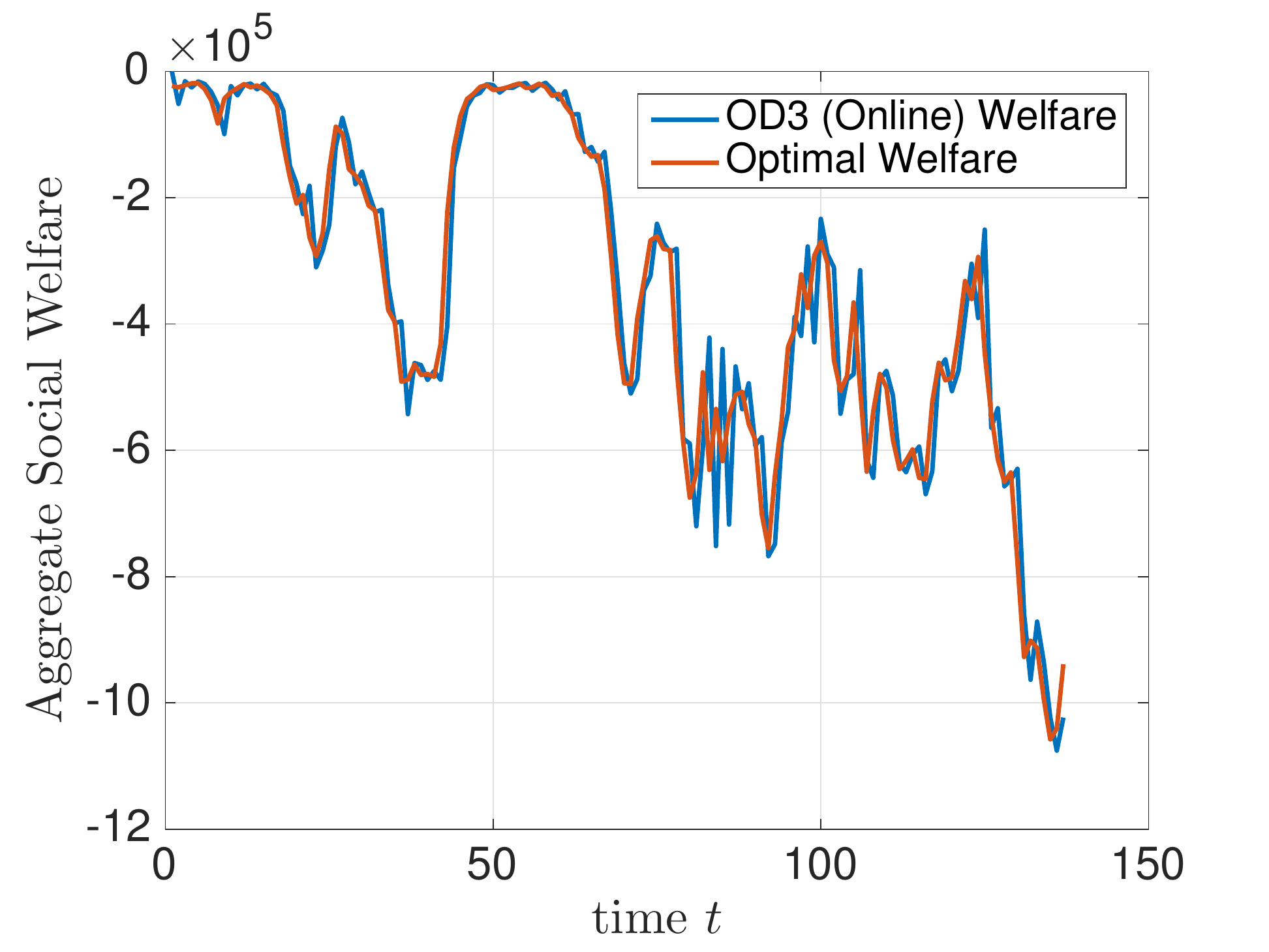}
    \caption{A plot showing the social welfare of the system computed by the OD3 algorithm in real-time and the optimal social welfare over time. }
    \label{fig:social}
  \end{minipage}
  \hfill
  \begin{minipage}[b]{0.32\textwidth}
    \includegraphics[width=\textwidth]{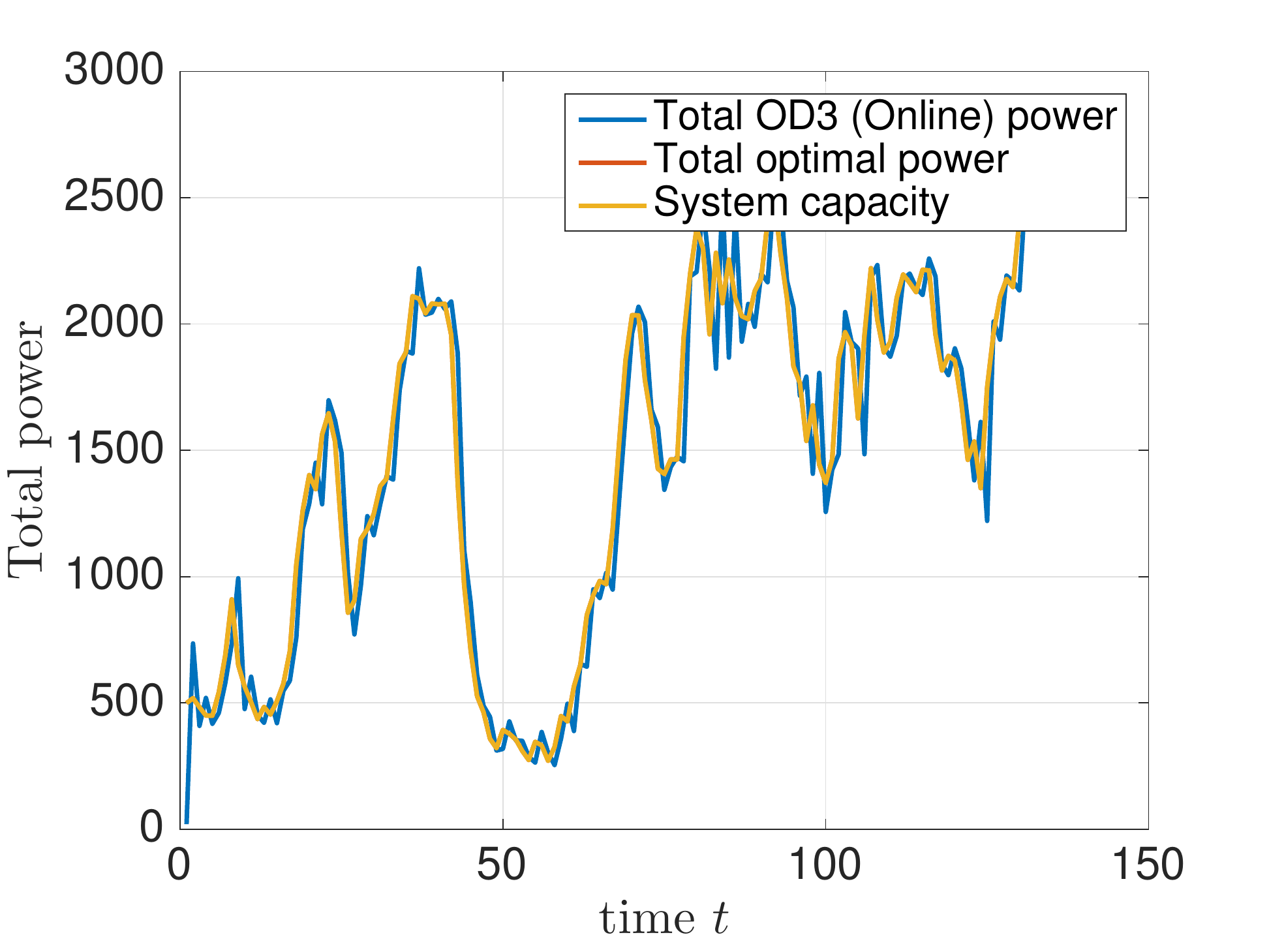}
    \caption{This figure shows the the total power allocated to users in real-time in comparison to the optimal total power and the total power supply available over time.}
    \label{fig:power}
  \end{minipage}
  \hfill
  \begin{minipage}[b]{0.32\textwidth}
    \includegraphics[width=\textwidth]{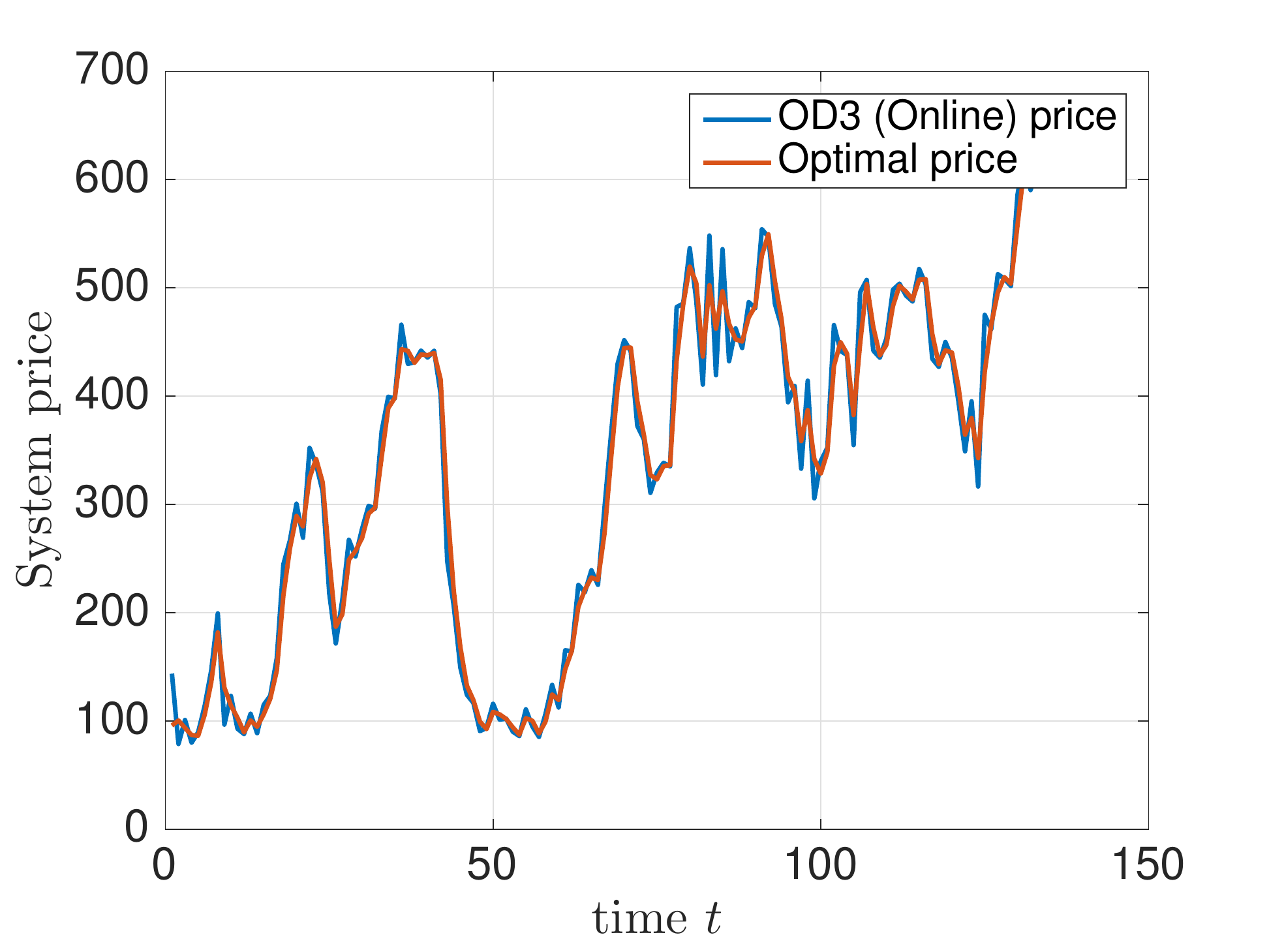}
    \caption{This figure illustrates the prices at each time-step generated by the real-time OD3 algorithm and the optimal price.}
    \label{fig:price-error}
  \end{minipage}
\end{figure*}


\section{Numerical Illustration}\label{sec:numerical}

We illustrate the performance of the OD3 power allocation algorithm on Problem \eqref{eq:system-problem1} comprising $N=10$ users and one supplier. For each user $i=1,\hdots, 10$, we set $U_i^t(\mbf{q}_i(t)) = -(\mbf{q}_i(t) - \mbf{s}_i^t)^2$. The variation in the utility functions is captured by the time-varying term $\mbf{s}_i^t$; and this choice of utility function $U_i^t(\cdot)$ is strongly concave. The \real{real data} on power supply was obtained from \cite{ieso}. The power generation data was from biofuel, wind and solar sources and capture real-world fluctuations experienced in the power distribution system. In implementing Algorithm \ref{alg:algorithm}, the step size used is the optimal step size $\eta = 1/N$. In Figure \ref{fig:social}, we illustrate the performance of the OD3 algorithm on the social welfare of the system in relation to the optimal social welfare over time, given fluctuating system dynamics. As can be observed in the plot, the social welfare computed online tracks the optimal social welfare. 
The plot in Figure \ref{fig:power} shows the aggregate online power allocation computed by the OD3 algorithm relative to the optimal aggregate power allocation and the power supply capacity. Notice that due to the equality constraint in \eqref{eq:system-problem1}, the optimal aggregate power is, in fact, the supplier capacity at each iteration. Furthermore, the online aggregate power closely follows the aggregate optimal allocation.

 Figure \ref{fig:price-error}, illustrates the price $\mbf{p}(t)$ generated by the OD3 algorithm and the optimal price $\mbf{p}^*(t)$ at each iteration for the multi-user, single supplier example. The results shown here indicate that the real-time, prices, computed online are very close to their respective optimal values at each time-step, despite the system fluctuations. 

%
%
\section{Conclusions}
\label{sec:conclude}
In this paper, we considered the problem of allocating electric power to users in a power distribution system, where the utility functions of the users and supply capacities of the suppliers are time-varying using an OD3 algorithm. We assumed the changes in the supplier capacities and user utility functions are changing at the the same time-scale as iterations in the algorithm, and presented a worst-case robustness analysis of the OD3 algorithm. In particular, we investigated and characterized performance guarantees of the OD3 power allocation Algorithm by deriving bounds on the social welfare of the system as a function of fluctuations in the system resulting from dynamic supply capacities as well as changes in the users' utility functions. 
The OD3 algorithm uses a one-way message passing protocol between the users and the suppliers in which the suppliers broadcast a coordinating signal (price) and the users locally compute their power allocation based on the price received. Furthermore, we presented and showed convergence of the OD3 algorithm. As illustrated, fluctuations in optimal allocation and price depend on variations in supply capacity and user utility functions. 

\appendices
 
 \section{Proof of Proposition \ref{prop:strongly-convex-dual}}
 \label{app:proof-prop:strongly-convex-dual}
 \begin{proof}
 Let $L$ be the global Lipschitz continuity parameter for all users $i$ and time $t$. Consider the decomposed dual function between users $D(\mbf{p}) = \sum_{i=1}^N D_i(\mbf{p})$, where 
\begin{equation}\nonumber
D^t_i(\mbf{p}) = U_i^t(\mbf{q}_i(\mbf{p}(t))) - \mbf{p}^T\mbf{q}_i(\mbf{p}(t)) + \frac{1}{N} \mbf{p}(t)^TQ(t),
\end{equation}
with gradient
\begin{align}\nonumber
\nabla D_i(\mbf{p}(t)) &= Q(t)/N - \mbf{q}_i(\mbf{p}(t))\\
&= Q(t)/N - (\nabla U_i^t)^{-1}(\mbf{p}(t))
\end{align} 
where $\mbf{q}_i(\mbf{p})$ is defined in \eqref{eq:local-primal}. To prove the Proposition, we show that $D_i(\mbf{p})$ is $1/L-$convex. From \cite[Theorem 2.1.9]{nesterov2004introductory} $D_i(\mbf{p})$ is  $1/L-$convex if and only if for all $\mbf{p}_1$, $\mbf{p}_2$, 
   \begin{multline}
            \frac{1}{L} \| \mbf{p}_2 - \mbf{p}_1 \|^2 \leq    \left\langle \mbf{p}_2 - \mbf{p}_1, \nabla D_i(\mbf{p}_2)- \nabla D_i (\mbf{p}_1) \right\rangle \\
  ~=\left\langle \mbf{p}_2 - \mbf{p}_1 ,\left({-}(\nabla U_i)^{-1}(\mbf{p}_2)\right) - \left({-}(\nabla U_i)^{-1}(\mbf{p}_1) \right) \right\rangle,  \label{eq:lemma2sdfa}
    \end{multline}
     where the equality comes from taking the gradient of the dual function and using the optimal solutions \eqref{eq:local-primal}. 
The inverse $(\nabla U)^{-1}$ exists since $U$ is strongly concave. Hence, $\nabla U_i$ is bijective and we can choose  vectors $\mbf{q}_1,\mbf{q}_2 \in \mathbb{R}^R$ such that $\nabla U_i(\mbf{q}_1) = \mbf{p}_1$ and $\nabla U_i(\mbf{q}_2)= \mbf{p}_2$. Since $U_i$ is $L-$smooth (from Assumption 1), we have that \cite{nesterov2004introductory}:
   \begin{align}
               {-}  \left\langle \nabla U_i(\mbf{q}_2){-} \nabla U_i(\mbf{q}_1) ,\mbf{q}_2 {-}\mbf{q}_1 \right\rangle 
      { \geq   }\frac{1}{L}   \| \nabla U_i(\mbf{q}_2){-} \nabla U_i(\mbf{q}_1)\|^2,  \notag 
    \end{align}
    or by using $\nabla U(\mbf{q}_1)=\mbf{p}_1$ and $\nabla U (\mbf{q}_2)=\mbf{p}_2$, we get
   \begin{align}
     \hspace{-0.1cm}  \left\langle \mbf{p}_2{-} \mbf{p}_1 ,\left({-}(\nabla U_i)^{-1}(\mbf{p}_2)\right){-} \left({-}(\nabla U_i)^{-1}(\mbf{p}_1) \right) \right\rangle 
      { \geq  } \frac{1}{L}   \| \mbf{p}_2{-} \mbf{p}_1\|^2.  \notag 
    \end{align}
    Hence,~\eqref{eq:lemma2sdfa} holds and we can conclude that $D_i(\mbf{p})$ is $1/L$-convex; therefore, $\sum_{i=1}^N D_i(\cdot)$ is $N/L$ strongly convex.
  \end{proof}
  
%

\section{Proof of Corollary \ref{prop:primal-bound}}
\label{app:prop:primal-bound}
\begin{proof}
  Using the fact that $\mbf{q}^*_i(t) = (\nabla U_i^t)^{-1}(\mbf{p}^*(t))$, $\nabla U_i^t(\cdot)$ is $L_i^t$-Lipschitz continuous, and the bound in Theorem \ref{lem:price-bound}, via the triangle inequality one obtains
  \begin{align*}
     \|\mbf{q}_i^*(t{+}1) {-} \mbf{q}_i^*(t) \| = &  \|  [\nabla U_i^{t+1}]^{-1}(\mbf{p}^*(t{+}1)) - [\nabla U_i^t]^{-1}(\mbf{p}^*(t)) \| \\
        \leq &   \|  [\nabla U_i^{t+1}]^{-1}(\mbf{p}^*(t{+}1)) {-} [\nabla U_i^t]^{-1}(\mbf{p}^*(t{+}1))  \|   \\
                & {+}    \|  [\nabla U_i^t]^{-1}(\mbf{p}^*(t{+}1))  {-} [\nabla U_i^t]^{-1}(\mbf{p}^*(t))  \|  \\
                 \leq & \frac{1}{\sigma} ||\mbf{p}^*(t)-\mbf{p}^*(t{+}1)|| +\frac{\alpha}{\sigma}   \\
               \leq & \frac{ L^2}{\sigma^2} \left(\frac{ \gamma}{ N} + \frac{\alpha}{\sigma}\right)  +\frac{\alpha}{\sigma} \qedhere
  \end{align*}
where we have used that $[\nabla U_i^t]^{-1}$ is $1/\sigma$-Lipschitz continuous, see Lemma~\ref{bijLemma}-b), and Eq.~\eqref{inTVlemma:1} in Lemma~\eqref{Lemma:varying} (in Appendix) together with Theorem~\eqref{lem:price-bound}.  
\end{proof}

\section{Proof of Theorem \ref{thm:main-result2}}
\label{app:thm:main-result2}
\begin{proof} 
For simplicity in notation, let $\mathcal{Q} = \|\mbf{q}_i(t+1) - \mbf{q}_i^*(t+1) \|$. Via the triangle inequality, we can express the LHS of \eqref{eq:price-tracking} as
\begin{align*}
\mathcal{Q} &{=} \| (\nabla U_i^{t{+}1})^{{-}1}(\mbf{p}(t{+}1))-(\nabla U_i^{t{+}1})^{{-}1}(\mbf{p}^*(t{+}1)) \|  \\
& \leq \| (\nabla U_i^{t{+}1})^{{-}1}(\mbf{p}(t{+}1))-(\nabla U_i^{t{+}1})^{{-}1}(\mbf{p}^*(t)) \|  \\
& + \| (\nabla U_i^{t{+}1})^{{-}1}(\mbf{p}^*(t))-(\nabla U_i^{t{+}1})^{{-}1}(\mbf{p}^*(t{+}1)) \|  \\
& \leq \frac{1}{\sigma} \|\mbf{p}(t+1) - \mbf{p}^*(t)\| + \frac{1}{\sigma}\| \mbf{p}^*(t) - \mbf{p}^*(t+1) \| \\
&\leq \frac{1}{\sigma}c \|\mbf{p}(t) - \mbf{p}^*(t)   \| + \frac{1}{\sigma} \frac{L^2}{\sigma}\left( \frac{\gamma}{N} + \frac{\alpha}{\sigma} \right) \\
& \leq \frac{c^t}{\sigma} \|\mbf{p}(0) - \mbf{p}^*(0)   \| +  \frac{L^2}{\sigma^2}\left( \frac{\gamma}{N} + \frac{\alpha}{\sigma} \right),
\end{align*}
where the last inequality above comes from Lemma \ref{thm:convergence}.
\end{proof}

\section{Additional Lemmas and Associated Proofs }
\label{app:additional-lemmas}
\begin{lemma}\label{bijLemma}  \label{inTVlemma:1} 
    Suppose that $U_i$ is $\sigma$-concave has $L$-Lipschitz gradient, for all $i=1,\cdots,N$.
    Then the following holds:
    \begin{enumerate}[a)]
       \item $\nabla U_i$ is bijective on $\R^R$ for all $i{=}1,{\cdots}, N$, i.e., $\nabla U_i^{-1}$ exists.
       \item $\nabla U_i^{-1}$ is $1/\sigma$-Lipschitz continuous and  monotone decreasing with parameter $\sigma/L^2$, i.e., for all $\mbf{x}_1,\mbf{x}_2\in \R^R$ 
         $$- \langle \nabla U_i^{-1}(\mbf{x}_1)- \nabla U_i^{-1}(\mbf{x}_2),\mbf{x}_1-\mbf{x}_2 \rangle \geq (\sigma/L^2) ||\mbf{x}_1-\mbf{x}_2||^2.  $$
         \item $\Gamma:=\sum_{i=1}^N [\nabla U_i]^{-1}$ is bijective, $N/\sigma$-Lipschitz continuous and $N\sigma/L^2$ monotone decreasing.
         \item $\Gamma^{-1}$ is $L^2/(N\sigma)$-Lipschitz continuous.
    \end{enumerate}
\end{lemma}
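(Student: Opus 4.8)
The plan is to prove the four claims in order, since each rests on the previous one, using throughout only the two structural facts available: writing $F_i := \nabla U_i$, the $\sigma$-strong concavity of $U_i$ supplies the reverse monotonicity inequality $-\langle F_i(\mbf{q}_1) - F_i(\mbf{q}_2), \mbf{q}_1 - \mbf{q}_2\rangle \geq \sigma \|\mbf{q}_1 - \mbf{q}_2\|^2$, while $L$-Lipschitz continuity of the gradient gives $\|F_i(\mbf{q}_1) - F_i(\mbf{q}_2)\| \leq L\|\mbf{q}_1 - \mbf{q}_2\|$. Every estimate in the lemma will be extracted from one of these two inequalities combined with Cauchy--Schwarz and the triangle inequality.

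For part a), injectivity of $F_i$ is immediate: if $F_i(\mbf{q}_1) = F_i(\mbf{q}_2)$, the monotonicity inequality forces $0 \geq \sigma\|\mbf{q}_1 - \mbf{q}_2\|^2$, so $\mbf{q}_1 = \mbf{q}_2$. For surjectivity I would fix an arbitrary target $\mbf{y} \in \R^R$ and observe that the map $\mbf{q} \mapsto -U_i(\mbf{q}) + \langle \mbf{y}, \mbf{q}\rangle$ is $\sigma$-strongly convex, hence coercive, and therefore attains a unique minimizer $\mbf{q}^{*}$ whose first-order condition reads $F_i(\mbf{q}^{*}) = \mbf{y}$. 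Since $\mbf{y}$ is arbitrary, $F_i$ is onto, so $F_i^{-1}$ exists on all of $\R^R$. For part b), put $\mbf{q}_j := F_i^{-1}(\mbf{x}_j)$, so $F_i(\mbf{q}_j) = \mbf{x}_j$. The $1/\sigma$-Lipschitz bound comes from chaining monotonicity with Cauchy--Schwarz: $\sigma\|\mbf{q}_1 - \mbf{q}_2\|^2 \leq -\langle \mbf{x}_1 - \mbf{x}_2, \mbf{q}_1 - \mbf{q}_2\rangle \leq \|\mbf{x}_1 - \mbf{x}_2\|\,\|\mbf{q}_1 - \mbf{q}_2\|$. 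The complementary monotone-decreasing bound with parameter $\sigma/L^2$ uses the Lipschitz inequality in the form $\|\mbf{x}_1 - \mbf{x}_2\| \leq L\|\mbf{q}_1 - \mbf{q}_2\|$, whence $-\langle \mbf{q}_1 - \mbf{q}_2, \mbf{x}_1 - \mbf{x}_2\rangle \geq \sigma\|\mbf{q}_1 - \mbf{q}_2\|^2 \geq (\sigma/L^2)\|\mbf{x}_1 - \mbf{x}_2\|^2$.

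Parts c) and d) then follow by summation and by reusing the inversion computation. Because each $F_i^{-1}$ is $1/\sigma$-Lipschitz, the triangle inequality gives that $\Gamma = \sum_{i=1}^N F_i^{-1}$ is $N/\sigma$-Lipschitz; summing the $N$ monotone-decreasing inequalities from part b) gives that $\Gamma$ is $(N\sigma/L^2)$-decreasing; and bijectivity of $\Gamma$ follows from precisely the injectivity-plus-coercivity argument of part a), now applied to the strongly monotone map $-\Gamma$. For part d) I apply the part-b) inversion step once more to $\Gamma$: its $(N\sigma/L^2)$-strong monotonicity together with Cauchy--Schwarz yields $\|\mbf{y}_1 - \mbf{y}_2\| \leq (L^2/(N\sigma))\|\mbf{z}_1 - \mbf{z}_2\|$ for $\mbf{y}_j = \Gamma^{-1}(\mbf{z}_j)$, which is exactly the claimed $L^2/(N\sigma)$-Lipschitz continuity.

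The one step that is not a one-line manipulation is the surjectivity claim underlying the existence of the inverses in a) and c); the Lipschitz and monotonicity estimates are all routine. I would therefore concentrate the rigor on justifying that a continuous, $\sigma$-strongly monotone map on $\R^R$ is onto, either through the coercive-minimization argument sketched above or by invoking the standard Browder--Minty surjectivity theorem for strongly monotone operators. Everything else is mechanical bookkeeping of constants.
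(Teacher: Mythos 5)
Your proof is correct, but note that the paper does not actually prove this lemma: its ``proof'' consists of a single sentence deferring to classic analysis texts, so there is no argument of the paper's to compare yours against. What you supply is the standard one, and all four estimates check out: injectivity and the $1/\sigma$-Lipschitz bound follow from strong monotonicity of $-\nabla U_i$ plus Cauchy--Schwarz, the $\sigma/L^2$ co-coercivity bound follows by inserting $\|\mbf{x}_1-\mbf{x}_2\|\leq L\|\mbf{q}_1-\mbf{q}_2\|$, and parts c) and d) are obtained by summation and by repeating the inversion step for $\Gamma$. You are also right that the only genuinely non-mechanical point is surjectivity. Your coercive-minimization argument handles it cleanly for each $\nabla U_i$; for $\Gamma$ itself you invoke Browder--Minty, which works, but you could avoid any appeal to monotone-operator theory by observing that $\Gamma$ is (up to the constant $Q(t)$) the gradient of the dual function $D^t$, which by Proposition~\ref{prop:strongly-convex-dual} is $N/L$-strongly convex; alternatively, surjectivity of $\Gamma$ follows directly from surjectivity of any single summand $[\nabla U_i]^{-1}$ once injectivity and the open-map property from the Lipschitz/monotonicity bounds are in hand. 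Either way, your write-up fills a gap the paper leaves open rather than rederiving its argument.
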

\begin{proof}
We omit the proofs and refer readers to classic texts in analysis such as  \cite{rudin1987real}. 
\end{proof}

\begin{lemma}\label{Lemma:varying}
  Suppose $U_i^t$ are $\sigma$-strongly concave and the gradients are $L$-Lipschitz continuous for all $i$ and $t$
  and Assumption \ref{assum:utility-bound} holds.
  Then for all $\mbf{p},Q\in \R^N$ the following inequalities hold:   
  \begin{align} 
     \Big\| [\nabla U_i^t]^{-1}(\mbf{p}) - [\nabla U_i^{t+1}]^{-1}(\mbf{p}) \Big\| \leq  \frac{\alpha}{\sigma}, \\
     \Big\| \Gamma_t^{-1}(Q) -  \Gamma_{t+1}^{-1}(Q) \Big\| \leq \frac{\alpha L^2}{\sigma^2}, \label{inTVlemma:1} 
  \end{align}
  where $\Gamma(\mbf{p})= \sum_{i=1}^N [\nabla U_i^t]^{-1} (\mbf{p})$.
  
  \end{lemma}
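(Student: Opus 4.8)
The plan is to prove each of the two inequalities by exploiting the defining property of the inverse maps---namely that the two points in question are sent to a common value by their respective time-indexed gradients---and then converting this ``equal image'' identity into a norm bound via strong monotonicity, Cauchy--Schwarz, and Assumption \ref{assum:utility-bound}. Both inequalities follow the same template, the second being a ``summed'' version of the first at the level of $\Gamma$.

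For the first inequality I would set $\mbf{q}_1 = [\nabla U_i^t]^{-1}(\mbf{p})$ and $\mbf{q}_2 = [\nabla U_i^{t+1}]^{-1}(\mbf{p})$, so that $\nabla U_i^t(\mbf{q}_1) = \mbf{p} = \nabla U_i^{t+1}(\mbf{q}_2)$. Applying the strong-concavity inequality for $U_i^t$ (Assumption \ref{assum:strong-concave}) to the pair $(\mbf{q}_1,\mbf{q}_2)$ gives $\sigma \|\mbf{q}_1 - \mbf{q}_2\|^2 \leq -\langle \nabla U_i^t(\mbf{q}_1) - \nabla U_i^t(\mbf{q}_2), \mbf{q}_1 - \mbf{q}_2\rangle$. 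The key step is to substitute $\nabla U_i^t(\mbf{q}_1) = \nabla U_i^{t+1}(\mbf{q}_2)$ on the right, which replaces the spatial gradient difference $\nabla U_i^t(\mbf{q}_1) - \nabla U_i^t(\mbf{q}_2)$ by the purely temporal difference $\nabla U_i^{t+1}(\mbf{q}_2) - \nabla U_i^t(\mbf{q}_2)$. Cauchy--Schwarz together with the bound $\alpha$ from Assumption \ref{assum:utility-bound} then yields $\sigma\|\mbf{q}_1 - \mbf{q}_2\|^2 \leq \alpha \|\mbf{q}_1 - \mbf{q}_2\|$, and dividing by $\|\mbf{q}_1 - \mbf{q}_2\|$ delivers the bound $\alpha/\sigma$.

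For the second inequality I would run the identical argument one level up, with $\Gamma$ in place of a single $\nabla U_i$. Setting $\mbf{p}_1 = \Gamma_t^{-1}(Q)$ and $\mbf{p}_2 = \Gamma_{t+1}^{-1}(Q)$ gives $\Gamma_t(\mbf{p}_1) = Q = \Gamma_{t+1}(\mbf{p}_2)$. Invoking the $N\sigma/L^2$ monotone-decreasing property of $\Gamma_t$ from Lemma \ref{bijLemma}-c) and substituting the equal-image identity exactly as before reduces the problem to bounding the temporal variation $\|\Gamma_{t+1}(\mbf{p}_2) - \Gamma_t(\mbf{p}_2)\|$. Since $\Gamma_t = \sum_{i=1}^N [\nabla U_i^t]^{-1}$, the triangle inequality followed by a termwise application of the first inequality gives $\|\Gamma_{t+1}(\mbf{p}_2) - \Gamma_t(\mbf{p}_2)\| \leq N\alpha/\sigma$. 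Combining via Cauchy--Schwarz and dividing out $\|\mbf{p}_1 - \mbf{p}_2\|$ produces $(N\alpha/\sigma)/(N\sigma/L^2) = \alpha L^2/\sigma^2$, as claimed.

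The Cauchy--Schwarz and division steps are routine. The only places requiring care are the sign bookkeeping in the strong-concavity and monotone-decreasing inequalities (we are dealing with concave utilities and decreasing inverse gradients), and correctly using the monotonicity modulus $N\sigma/L^2$ for $\Gamma$ rather than the modulus $\sigma/L^2$ of an individual $[\nabla U_i]^{-1}$. I do not anticipate a genuine obstacle; the one conceptual move worth isolating is the substitution of the equal-image condition that converts a spatial gradient difference into a temporal one, since this is precisely what allows Assumption \ref{assum:utility-bound} to enter the estimate.
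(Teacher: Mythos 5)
Your proposal is correct and follows essentially the same route as the paper's proof: both hinge on the ``equal image'' identity $\nabla U_i^t(\mbf{q}_1)=\mbf{p}=\nabla U_i^{t+1}(\mbf{q}_2)$ (resp.\ $\Gamma_t(\mbf{p}_1)=Q=\Gamma_{t+1}(\mbf{p}_2)$), use strong monotonicity to lower-bound the displacement, and use Assumption~\ref{assum:utility-bound} (summed over $i$ for the $\Gamma$ case) to upper-bound the temporal variation. The only cosmetic difference is that you work directly with the monotonicity inner product plus Cauchy--Schwarz, whereas the paper phrases the same estimate via the reverse triangle inequality on gradient norms.
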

  \begin{proof}
  
      We start by showing that for $\mbf{p}\in \R^N$ it holds that
       \begin{align}    \label{inLemmatimevar:Uin} 
                     \Big\| [\nabla U_i^t]^{-1}(\mbf{p}) - [\nabla U_i^{t+1}]^{-1}(\mbf{p}) \Big\| \leq  \frac{\alpha}{\sigma}. 
       \end{align}   
      By Lemma~\ref{bijLemma}-a) there exist $\mbf{x}_1,\mbf{x}_2\in \R^N$ such that 
      $\nabla U_i^t(\mbf{x}_1)=\nabla U_i^{t+1}(\mbf{x}_2)=\mbf{p}$.
      Hence using the triangle inequality we get that
      \begin{align}
          0 &= \|\mbf{p}-\mbf{p}\| \notag  \\
             &=\| \nabla U_i^{t+1}(\mbf{x}_2) - \nabla U_i^t(\mbf{x}_1) \|  \notag \\
             &\geq   \|\nabla U_i^{t+1}(\mbf{x}_2)- \nabla U_i^{t+1}(\mbf{x}_1)|| -\|\nabla U_i^{t+1}(\mbf{x}_1)-\nabla U_i^{t}(\mbf{x}_1) \|  \notag \\
             &\geq  \sigma \| \mbf{x}_2-\mbf{x}_1\| - \alpha. \label{inLemma:varying-eq1cc}
      \end{align}
      By rearranging~\eqref{inLemma:varying-eq1cc} we get that
      \begin{align}      
             \frac{\alpha}{\sigma} \geq&   || \mbf{x}_2-\mbf{x}_1||=
                     \Big\| [\nabla U_i^t]^{-1}(\mbf{p}) - [\nabla U_i^{t+1}]^{-1}(\mbf{p}) \Big\|.
      \end{align}      
    \quad \quad By summing over~\eqref{inLemmatimevar:Uin} and using triangle inequality we get also that 
    \begin{align}
                \| \Gamma_t(\mbf{p}) - \Gamma_{t+1}(\mbf{p}) \| \leq \frac{\alpha N}{\sigma}.
      \end{align}       
       Now take any $Q\in \R^R$. 
       Using that $\Gamma_t$ and $\Gamma_{t+1}$ are bijective, (from Lemma~\ref{bijLemma}-c)), there exists $\mbf{p}_1,\mbf{p}_2\in \R^N$ such that $Q = \Gamma_t(\mbf{p}_1)=\Gamma_{t+1}(\mbf{p}_2)$. 
       Similarly, we obtain
       \begin{align}
          0 &= \|Q-Q\| \notag  \\
             &=\| \Gamma_{t+1}(\mbf{p}_2) - \Gamma_t(\mbf{p}_1)  \|  \notag \\
             &\geq   \|\Gamma_{t+1}(\mbf{p}_2) - \Gamma_{t+1}(\mbf{p}_1) \| - \|\Gamma_{t+1}(\mbf{p}_1)-\Gamma_t(\mbf{p}_1) \|  \notag \\
              &\geq  \frac{N \sigma}{L^2} \| \mbf{p}_2-\mbf{p}_1\| - \frac{\alpha N}{\sigma}. \label{inLemma:varying-eq3c}
      \end{align}      
      Finally, by rearranging~\eqref{inLemma:varying-eq3c} and using that $\Gamma_t$ and $\Gamma_{t+1}$ are bijective, (from Lemma~\ref{bijLemma}-c)), we get  that
       \begin{align}      
             \frac{\alpha L^2}{\sigma^2} &\geq  \| \mbf{p}_2-\mbf{p}_1 \|=
                    \| \Gamma_t^{-1}(Q) - \Gamma_{t+1}^{-1}(Q)  \|.
     \end{align}    
       
  \end{proof}

\bibliographystyle{unsrt}
{\small
{\footnotesize
\bibliography{refs.bib}}
\end{document}